\newtheoremstyle{exampstyle}
{3pt} 
{3pt} 
{\itshape} 
{} 
{\bfseries}
{.} 
{.5em} 
{} 
\theoremstyle{exampstyle}
\DeclareMathAlphabet{\mathcal}{OMS}{cmsy}{m}{n}
\newcommand\equi{\overset{\hspace{5pt}\hour^{k}}{=\joinrel=}}
\newtheorem{theorem}{Theorem}
\newtheorem{example}{Example}
\newtheorem{definition}{Definition}
\newtheorem{lemma}{Lemma}
\newtheorem{cor}{Corollary}
 	\definecolor{red}{HTML}{FF0000}
 	\definecolor{green}{HTML}{33FF33}
 	\definecolor{blue}{HTML}{3333FF}
 	\definecolor{grey}{HTML}{A9A9A9}
 \tikzset{
  graph vertex/.style={
    circle,
    draw,
  },
  graph directed edge/.style={
    ->,
    >=stealth,
    thick,
  },
  graph tree edge/.style={
    graph directed edge
  },
  graph forward edge/.style={
    graph directed edge,
    every edge/.style={
      edge node={node [fill=white,font=\scriptsize] {f}},
      loosely dotted,
      draw,
    },
  },
  graph back edge/.style={
    graph directed edge,
    every edge/.style={
      edge node={node [fill=white,font=\scriptsize] {0}},
      draw,
    },
  },
  graph back edge1/.style={
    graph directed edge,
    every edge/.style={
      edge node={node [fill=white,font=\scriptsize] {1}},
      draw,
    },
  },
  graph back edge2/.style={
    graph directed edge,
    every edge/.style={
      edge node={node [fill=white,font=\scriptsize] {2}},
      draw,
    },
  },
  graph back edge3/.style={
    graph directed edge,
    every edge/.style={
      edge node={node [fill=white,font=\scriptsize] {3}},
      draw,
    },
  },
  graph cross edge/.style={
    graph directed edge,
    every edge/.style={
      edge node={node [fill=white,font=\scriptsize] {c}},
      dotted,
      draw,
    },
  },
}
\newcommand{\hour}{\mathbin{\rotatebox[origin=c]{90}{$\hourglass$}}}
\begin{document}


\title{Searching Personalized $k$-wing in Large and Dynamic Bipartite Graphs}


\author{
Aman Abidi, 
Lu Chen, 
Rui Zhou, 
Chengfei Liu\\ 

{Swinburne University of Technology}\\ 
\{aabidi, luchen, cliu, rzhou\}@swin.edu.au
}

\maketitle

\SetAlFnt{\small\normalfont}
\SetAlCapHSkip{0em}

\everymath{\small}

\setlength\floatsep{1.25\baselineskip plus 3pt minus 3pt}

\setlength\textfloatsep{1.25\baselineskip plus 3pt minus 3pt}
\setlength\intextsep{1.25\baselineskip plus 3pt minus 3 pt}

\setlength{\abovecaptionskip}{5pt}
\setlength{\belowcaptionskip}{-2pt}

\setlength{\abovedisplayskip}{8pt}
\setlength{\belowdisplayskip}{8pt}

\begin{abstract}
There are extensive studies focusing on the application scenario that all the bipartite cohesive subgraphs need to be discovered in a bipartite graph. 
However, we observe that, for some applications, one is interested in finding bipartite cohesive subgraphs containing a specific vertex. In this paper, we study a new query dependent bipartite cohesive subgraph search problem based on $k$-wing model, named as the personalized $k$-wing search problem. We introduce a $k$-wing equivalence relationship to summarize the edges of a bipartite graph $G$ into groups. Therefore, all the edges of $G$ are segregated into different groups, i.e. \textit{$k$-wing equivalence class}, forming an efficient and wing number conserving index called \textit{EquiWing}. Further, we propose a more compact version of \textit{EquiWing}, \textit{EquiWing-Comp}, which is achieved by integrating our proposed \textit{$k$-butterfly loose} approach and discovered hierarchy properties. These indices are used to expedite the personalized $k$-wing search with a non-repetitive access to $G$, which leads to linear algorithms for searching the personalized $k$-wing. 
Moreover, we conduct a thorough study on the maintenance of the proposed indices for evolving bipartite graphs. We discover novel properties that help us localize the scope of the maintenance at a low cost. 
By exploiting the discoveries, we propose novel algorithms for maintaining the two indices, which substantially reduces the cost of maintenance. 
We perform extensive experimental studies in real, large-scale graphs to validate the efficiency and effectiveness of \textit{EquiWing} and \textit{EquiWing-Comp} compared to the baseline.  
\end{abstract}

\section{Introduction}
\noindent\textbf{Bipartite cohesive subgraph}. A bipartite graph is an interesting structure that can be used to represent large heterogeneous data. Many real-world networks can be modeled using bipartite graphs such as customer-product, product-rating and author-paper (Fig. \ref{fig:my_label_motivation}) networks. For a given bipartite graph $G=(U,V,E)$, a bipartite cohesive subgraph $B=(U',V',E') \subseteq G$ such that $U'$ and $V'$ are extensively connected via edges in $E'$. A bipartite cohesive subgraph $B=(U',V',E')$ is unaffected by the connections within $U'$ or $V'$ themselves. 

\noindent\textit{Applications}. Finding bipartite cohesive subgraphs has a rich literature with examples such as word and document clustering \cite{Dhillon:2001:CDW:502512.502550}, spam group detection in the web \cite{Gibson:2005:DLD:1083592.1083676}, and sponsored advertisement \cite{fain2006sponsored}. The objective of finding all the bipartite cohesive subgraphs 
is to aim the entire bipartite graph and generally apply a global criterion to provide macroscopic information.

\noindent\textit{Existing models}. Due to a large number of applications, 
various bipartite cohesive subgraph models have been studied such as bicliques \cite{abidipivot}, $(\alpha,\beta)$-core \cite{liu2020efficient}.
Recently, \textit{$k$-bitruss} \cite{ DBLP:conf/wsdm/SariyuceP18} and \textit{$k$-wing} \cite{zou2016bitruss} have drawn great attention since they exhibit high cohesiveness while can be discovered in polynomial time. 
\textit{Bitruss subgraph} is defined based on butterfly (i.e., a complete $2\times 2$ bipartite graph). A bipartite subgraph is a $k$-bitruss if it first satisfies the minimum $k$-butterfly constraint defined as every edge in the subgraph participates in at least $k$ butterflies and then it is non-extensible. 
A $k$-wing further strengthens a $k$-bitruss by adding an extra constraint after the minimum $k$-butterfly constraint of the $k$-bitruss. 
The extra constraint is called butterfly connectivity constraint, defined as any two edges in the subgraph are either in a butterfly or can be reached by a set of edge-overlapping butterflies. 
Although the edges of a $k$-wing (\cite{DBLP:conf/wsdm/SariyuceP18,zou2016bitruss}) exhibit the same cohesiveness as a $k$-bitruss, the \textit{$k$-wing} differentiates the $k$-bitruss using the butterfly connectivity constraint, which makes a $k$-wing superior. 
The bipartite graph in Fig. \ref{fig:my_label_motivation} contains a single \textit{$3$-bitruss} (blue) and two \textit{$3$-wings} (red and orange), consisting of edges represented with solid lines. 
Let Fig.~\ref{fig:my_label_motivation} represent an \textit{author-paper network} from the domain of \textit{Computer Science}. Finding \textit{bipartite cohesive subgraph} would allow us to discover the authors who can potentially work together. 
We can observe that the \textit{$k$-wing} model can further discover more cohesive subgraphs (\textit{Graph Data Management (GDM (\textit{$3$-wing}))} and \textit{Spatial Data Management (SDM (\textit{$3$-wing}))}) compared with the $3$-bitruss (\textit{Database (DB)}) discovered by the $k$-bitruss model. 

\begin{figure}[]
\centering
\scalebox{0.2}{

\begin{tikzpicture}

\node(p)[] at (4.5,9.5) { \Huge Computer Science (author-paper network)};
\draw [,] (-2.6,0) -- (40.6,0);
\draw [,] (40.6,0) -- (40.6,9);
\draw [,] (40.6,9) -- (-2.6,9);
\draw [,] (-2.6,9) -- (-2.6,0);
\node(p)[blue] at (-0.2,8.6) { \Huge DB };
\node(p)[blue] at (-0.2,7.8) { \Huge ($3$-bitruss)};
\node(p)[orange] at (30.45,7.7) { \Huge SDM };
\node(p)[orange] at (30.45,7) { \Huge ($3$-wing)};
\node(p)[red] at (13.25,7.5) { \Huge GDM };
\node(p)[red] at (13.25,6.8) { \Huge ($3$-wing)};
\draw [,blue] (7.8,0.3) -- (40.3,0.3);
\draw [,blue] (40.3,0.3) -- (40.3,8.7);
\draw [,blue] (40.3,8.7) -- (1.7,8.7);
\draw [,blue] (7.8,0.3) -- (7.8,4.3);
\draw [,blue] (7.8,4.3) -- (1.7,4.3);
\draw [,blue] (1.7,4.3) -- (1.7,8.7);


\draw [, orange] (20.5,0.6) -- (40,0.6);
\draw [, orange] (40,0.6) -- (40,8.4);
\draw [, orange] (40,8.4) -- (20.5,8.4);
\draw [, orange] (20.5,8.4) -- (20.5,0.6);

\draw [,red] (8.1,0.9) -- (23,0.9);
\draw [,red] (23,0.9) -- (23,8.1);
\draw [,red] (23,8.1) -- (2,8.1);
\draw [,red] (8.1,0.9) -- (8.1,4.6);
\draw [,red] (8.1,4.6) -- (2,4.6);
\draw [,red] (2,4.6) -- (2,8.1);

\node (v_8) at ( 39,6) [] {\includegraphics[scale=0.05]{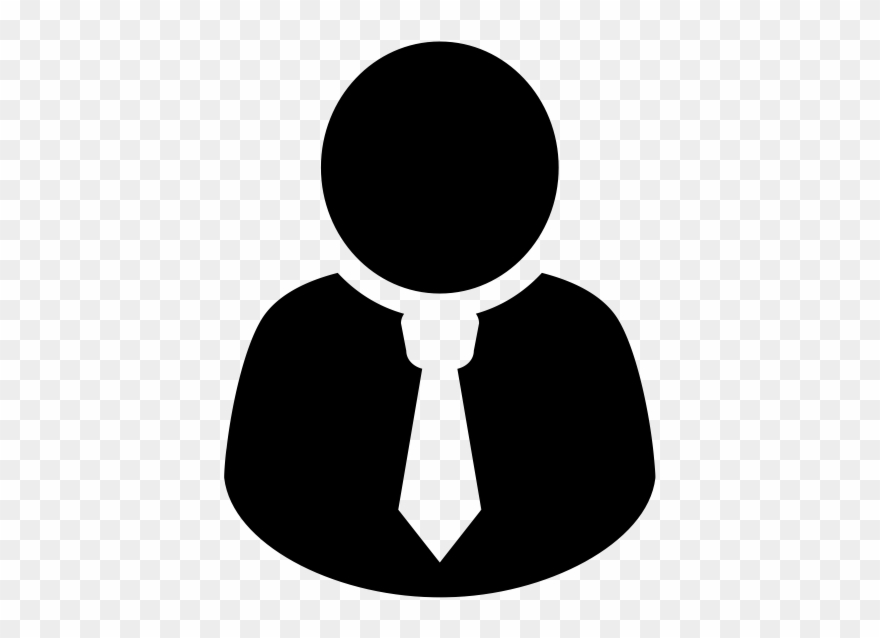} };
\node () at (39,7) [] {\Huge $v_8$};

\foreach \name/ \x in {v_1/-.7,v_2/5,v_3/10.4,v_4/16.1,v_5/21.85,v_6/27.6,v_7/33.3}
\node (\name) at ( \x,6) [] {\includegraphics[scale=0.05]{author_clipart.png} };
\foreach \name/ \x in {v_1/-1,v_2/4.7,v_3/10.4,v_4/16.1,v_5/21.85,v_6/27.6,v_7/33.3}
\node () at (\x,7) [] {\Huge $\name$};


\foreach \name/ \x in {u_1/2,u_2/7,u_3/13.5,u_4/18.975,u_5/24.725,u_6/30.45,u_7/36.15}
\node (\name) at ( \x,2.3) [] {\includegraphics[scale=0.07]{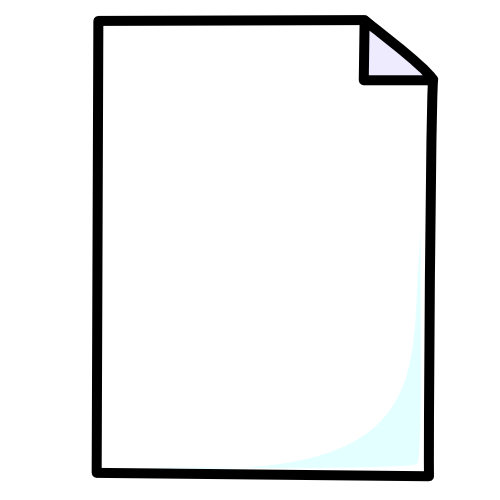}};

\foreach \name/ \x in {u_1/2,u_2/7,u_3/13.25,u_4/18.975,u_5/24.725,u_6/30.45,u_7/36.15}
\node () at ( \x,1.3) [] {\Huge $\name$};
\path[]
(v_1) 
edge  [very thick,loosely dashed](u_1)
edge  [very thick,loosely dashed](u_2)
(v_2) edge  [very thick,loosely dashed] (u_1)
edge [very thick, loosely dashed] (u_2)
edge  [very thick,] (u_3) 
edge  [very thick,] (u_4)
(v_3) edge [very thick,] (u_4)
edge  [very thick,loosely dashed](u_2)
edge  [very thick,](u_3)
(v_4) 
edge  [very thick,](u_4)
edge  [very thick,](u_3)
(v_5) edge  [very thick,](u_3)
edge  [very thick,](u_4)
edge  [very thick,](u_5)
edge  [very thick,](u_6)
(v_6) edge  [very thick,](u_5)
edge  [very thick,](u_6)
edge  [very thick,](u_7)
edge  [very thick,loosely dashed](u_4)
(v_7) edge  [very thick,](u_5)
edge  [very thick,](u_6)
edge  [very thick,](u_7)
(v_8) edge  [very thick,](u_5)
edge  [very thick,](u_6)
edge  [very thick,](u_7);
\end{tikzpicture}
}
\caption{$k$-bitruss, $k$-wing, and personalized $k$-wing}
\label{fig:my_label_motivation}
\vspace{-10pt}
\end{figure} 

\noindent\textbf{Personalized bipartite cohesive subgraph}. Distinct vertices in a bipartite graph may have distinct properties, which requires the microscopic analysis, i.e. personalized. There are many real-world applications where people are more interested in personalized bipartite cohesive subgraphs rather than all possible bipartite cohesive subgraphs, 
e.g., analysis in co-authorship networks (e.g. DBLP) for improving the author's potential future collaborations and project fundings \cite{arnab2016analysis,Sozio2010cocktail}. 
To provide a better insight, we demonstrate how personalized bipartite cohesive subgraph using \textit{$k$-wing} model can help authors $v_8$ and $v_5$ discover their potential future collaborators in Fig. \ref{fig:my_label_motivation}. 
The \textit{$3$-wing} including $v_8$, recommends the 
authors from the \textit{SDM} subdomain in which $v_8$ works, i.e. $v_5,v_6$ and $v_7$, which is more practical and promising in comparison to recommend the authors from
\textit{GDM} ($v_2,v_3$, and $v_4$) as $v_8$ is 
inexperienced in \textit{GDM}. Also, we do not need to enumerate all \textit{$3$-wings} as it may produce some uninteresting results where $v_8$ is not participating 
i.e. \textit{GDM}. 
For the author $v_5$, the personalized cohesive bipartite subgraph consists of two \textit{$3$-wings} 
suggesting authors from two subdomains i.e. $v_5$ can 
collaborate with the authors either from \textit{GDM} or \textit{SDM} with same priority as $v_5$ is experienced in 
both \textit{GDM} and \textit{SDM}. 
In contrast, if we omit the personalization, the search for authors $v_{8}$ and $v_{5}$ would get the same result consisting of all the $3$-wings in Fig.~\ref{fig:my_label_motivation}.  
The problem of personalized bipartite cohesive subgraph search can be used for a wide variety of other applications such as personalized recommendation for products \cite{zhang2019domain} and hotels \cite{kaya2019hotel}, identifying potential websites for banner advertisements \cite{hunter2013structural}, efficient training of the employees on the internal projects by choosing the optimal team for an employee \cite{zhang2017enterprise} and speculating the drug-target interactions \cite{yamanishi2013chemogenomic}.

In this paper, we study \textit{personalized $k$-wing search} based on the $k$-wing model. 
Given a bipartite graph $G(U,V,E)$, a query vertex $q\in U\cup V$ and an integer $k$, the personalized $k$-wing search returns all the \textit{$k$-wings}. 
The problem of personalized $k$-wing search can be addressed by adapting the algorithm proposed for finding all the \textit{$k$-wings} \cite{DBLP:conf/wsdm/SariyuceP18}, which serves as our baseline. 
The idea is, we first compute the $k$-bitrusses using the state-of-the-art bitruss algorithm \cite{DBLP:conf/icde/Wang0Q0020}. 
Then, among the $k$-bitrusses, we further explore within the $k$-bitruss containing $q$ to form the personalized 
\textit{$k$-wings} by grouping edges in the the $k$-bitruss containing $q$ together if the edges are butterfly connected.  
However, even if we ignore the cost for computing the 
$k$-bitruss, the baseline still has high time complexity, i.e., $\mathcal{O}(d^{2}_{max}|E|)$, where $d_{max}$ is the maximum degree of $G$ and $|E|$ is the number of edges in the personalized \textit{$k$-wings}.

\begin{table}[t]
    \centering
 \caption{Notations and their descriptions}
\scalebox{.75}{
\centering
\begin{tabular}[t]{|m{9em}|m{25em}|}
\hline
\rowcolor{grey}
Notation & Description\\
\hline
$G=(U,V,E)$ & A simple bipartite graph $G$
\\
\hline
$EW(\raisebox{1.5pt}{$\chi$},\Upsilon)$ & The summarized index \textit{EquiWing} 
\\
\hline
$EW\raisebox{.5pt}{-}C(\raisebox{1.5pt}{$\chi$},\Upsilon)$ & The compressed index \textit{EquiWing-Comp} 
\\
\hline
$u,w,u_i$($v,x,v_i$) & vertices $\in U$ (vertices $\in V$)\\
\hline
$\nu,\mu,\nu_i,\mu_i$ & super nodes $\in \raisebox{1.5pt}{$\chi$}$\\
\hline


$\psi(e)$ & The wing number of an edge $e \in E$\\
\hline
$\hour_{uvwx}$ & The butterfly composed of $u,w \in U$ and $v,x \in V$\\
\hline
$ \Gamma (u)$ & The set of neighboring vertices of $u$ of a graph\\
\hline
$e_1 \equi e_2$ & $e_1$ and $e_2$ are $k$-wing equivalent\\
\hline
$e_1 \xLeftrightarrow{k} e_2$, $\hour_1 \xLeftrightarrow{k} \hour_2$& $e_1$ and $e_2$ ($\hour_1$ and $\hour_2$) are $k$-butterfly connected\\

\hline

$\nu_1 \xLeftrightarrow{\geq k} \nu_2$, $\hour_1 \xLeftrightarrow{\geq k} \hour_2$ & $\nu_1$ and $\nu_2$ ($\hour_1$ and $\hour_2$) are $k$-butterfly loose connected\\
\hline
\end{tabular}
}
\label{table:symbols}
\vspace{-5pt}
\end{table}

To speed up the personalized $k$-wing search for real-world applications which may have an extensive number of queries, we present a novel wing-based containment index. Using the proposed index, a personalized $k$-wing 
query can be processed in linear time w.r.t. the size of the result.  
The core idea of the index is the introduction of the \textit{$k$-butterfly equivalence} relationship among the edges of a bipartite graph. For a given bipartite graph $G$, two edges are \textit{$k$-butterfly equivalence}, if they participate in the same butterfly or are connected by 
a series of dense butterflies, i.e. \textit{$k$-butterfly connected}.
We group the edges into different \textit{containment class} using \textit{$k$-butterfly equivalence}. 
We show that 
the whole bipartite graph can be partitioned into the collection of the \textit{containment classes} without losing any edges. 
\textit{EquiWing} is the super graph based index which is composed of super nodes containing the group of $k$-butterfly equivalence edges. 
Further, we improve the space and query time of \textit{EquiWing} by constructing \textit{EquiWing-Comp}. 
\textit{EquiWing-Comp} utilizes \textit{$k$-butterfly loose connectivity} 
to provide the compression (reducing number of super nodes and edges) in \textit{EquiWing}. We also integrate the hierarchy property of \textit{$k$-wing} to improve the 
query processing.

Since, in many real-world applications, such as online social networks \cite{kumar2010structure}, web graph \cite{oliveira2007observing} and collaboration network \cite{abbasi2012betweenness}, bipartite graphs are evolving where vertices/edges will be inserted/deleted 
over time dynamically. Hence, we also examine 
\textit{personalized $k$-wing search} in dynamic bipartite graphs. 
The insertion of an edge may affect the other edges by leading them to participate in a new \textit{$k'$-wing} cohesive subgraph ($k'\ne k$). 
Unlike the case of \textit{$k$-truss} \cite{huang2014querying} (the counter part of $k$-wing for general graphs), where an edge could move to $(k+1)$-truss subgraph after an insertion, the edge in a $k$-wing could move to ($k+i$)-wing subgraph after the insertion. Here, $i$ could be greater than $1$, which brings new challenges for $k$-wing maintenance.  
A tight upper bound needs to be derived to estimate the \textit{$(k+i)$-wing} subgraph that an edge could move to after an insertion, so that we can effectively identify the affected region with a low cost. 
Similarly, we need to derive a tight lower bound for the case of deletion. 

We summarize our contributions as follows:
\begin{enumerate}
    \item \textit{Effective cohesive subgraph model }- we propose the personalized \textit{$k$-wing} model which ensures high cohesiveness. 
    \item \textit{EquiWing} - we present the \textit{$k$-wing equivalence} to construct the \textit{EquiWing} index. It summarizes the input bipartite graph into a super graph and is self-sufficient, which guarantees the personalized \textit{$k$-wings} can be found in linear time. 
    \item \textit{EquiWing-Comp} - we propose the \textit{$k$-butterfly loose connectivity} to compress the \textit{EquiWing} and use the hierarchy property of the \textit{$k$-wing} to expedite the query processing.
    \item \textit{Efficient maintenance of the indices} - we provide an efficient incremental algorithm to maintain \textit{EquiWing} and \textit{EquiWing-Comp} for dynamic bipartite graphs by avoiding unnecessary recomputation. 
    \item \textit{Extensive Experimental Analysis} - we carry out extensive experimental analysis of both the indexing schemes and a baseline algorithm across 4 real-world datasets. 
\end{enumerate}
The remaining paper is organized as follows. Section \ref{section:background} discusses the preliminaries and problem definition. 
In Section \ref{section:methodology}, we propose the two indexing scheme \textit{EquiWing} and \textit{EquiWing-Comp} for personalized $k$-wing search. 
Section \ref{section:dynamic} presents an efficient algorithm to maintain \textit{EquiWing} and \textit{EquiWing-Comp}. Section \ref{section:experiments} presents the experimental evaluation of all the algorithms. Section \ref{section:related_work} states the related work. Section \ref{section:conclusion} concludes the paper.
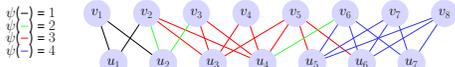
\begin{figure}[]
\centering
\scalebox{0.22}{

\begin{tikzpicture}
\node (info) at (-3,6) []{\Huge $\psi(-)=1$};
\node (info) at (-3,5.25) []{\Huge $\psi(\textcolor{green}{-} )=2$};
\node (info) at (-3,4.5) []{\Huge $\psi(\textcolor{red}{-} )=3$};
\node (info) at (-3,3.75) []{\Huge $\psi(\textcolor{blue}{-} )=4$};
\foreach \name/ \x in {v_1/1,v_2/4,v_3/7,v_4/10,v_5/13,v_6/16,v_7/19,v_8/22}
\node (\name) at ( \x,6) [circle,draw=blue!20,fill=blue!20,very thick,inner sep=7pt] {\Huge $\name$};

\foreach \name/ \x in {u_1/2,u_2/5,u_3/8,u_4/11,u_5/14,u_6/17,u_7/20}
\node (\name) at (\x,3) [circle,draw= blue!20,fill=blue!20,very thick,inner sep=7pt] {\Huge $\name$};
\path[]
(v_1) 
edge  [very thick](u_1)
edge  [very thick](u_2)
(v_2) edge  [very thick] (u_1)
edge [very thick,green] (u_2)
edge  [very thick,red] (u_3) 
edge  [very thick,red] (u_4)
(v_3) edge [very thick,red] (u_4)
edge  [very thick,green](u_2)
edge  [very thick,red](u_3)
(v_4) 
edge  [very thick,red](u_4)
edge  [very thick,red](u_3)
(v_5) edge  [very thick,red](u_3)
edge  [very thick,red](u_4)
edge  [very thick,red](u_5)
edge  [very thick,red](u_6)
(v_6) edge  [very thick,blue](u_5)
edge  [very thick,blue](u_6)
edge  [very thick,blue](u_7)
edge  [very thick,green](u_4)
(v_7) edge  [very thick,blue](u_5)
edge  [very thick,blue](u_6)
edge  [very thick,blue](u_7)
(v_8) edge  [very thick,blue](u_5)
edge  [very thick,blue](u_6)
edge  [very thick,blue](u_7);
\end{tikzpicture}%
}

\caption{Wing number for the edges in $G$}
\label{fig:my_label_ego}
\vspace{-10pt}
\end{figure}    
\section{Preliminaries and Problem Definition}
\label{section:background}

In this section we formally introduce a $k$-wing by firstly defining a butterfly. 

\begin{definition}{Butterfly ($\hour$):}
Given a bipartite graph $G=(U,V,E)$ and four vertices $u, w\in U, v,x \in V $, a butterfly induced by $u,v,w,x$ is a cycle of length of $4$ consisting of edges $(u,v)$, $(w,v)$, $(w,x)$ and $(u,x)$ $\in$ $E$. 
\label{def:butterfly}
\end{definition}
For a given bipartite graph in Fig. \ref{fig:my_label_ego} edges $(u_1,v_1)$, $(u_1,v_2)$, $(u_2,v_2)$ and $(u_2,v_1)$ form a butterfly $\hour_{u_1v_1u_2v_2}$. 

\noindent\textbf{Butterfly support $(\hour(e,G))$}. It is the number of butterflies in $G$ containing an edge $e$ and is denoted as $\hour(e,G)$. We replace $\hour(e,G)$ with $\hour(e)$ whenever the context is obvious. 

\noindent\textbf{Butterfly adjacency}. Given two butterflies $\hour_{1}$ and $\hour_{2}$ in $G$,  $\hour_{1}$ and $\hour_{2}$ are butterfly adjacent if $\hour_{1}$ $\cap$ $\hour_{2}$ $\ne$ $\emptyset$.  

\noindent\textbf{Butterfly connectivity $(\xLeftrightarrow{})$}. 
Given two butterflies $\hour_{s}$ and $\hour_{t}$ in $G$,  $\hour_{s}$ and $\hour_{t}$ are butterfly connected if there exists a series of butterflies $\hour_{1}, \ldots, \hour_{n}$  in $G$, in which $n \geq 2$ such that $\hour_{s}=\hour_{1}$, $\hour_{t}=\hour_{n}$  and for $1\le i< n$, $\hour_{i}$ and $\hour_{i+1}$ are butterfly adjacent. 

In Fig. \ref{fig:my_label_ego}, the edge $(u_2, v_2)$ is contained in $\hour_{u_1v_1u_2v_2}$, $\hour_{u_2v_2u_4v_3}$ and $\hour_{u_2v_2u_3v_3}$ thus its \textit{butterfly support} $\hour((u_2, v_2),G) = 3$. Also, $\hour_{u_1v_1u_2v_2}$ and $\hour_{u_2v_2u_3v_3}$ are said to be \textit{butterfly connected} as  $\hour_{u_1v_1u_2v_2} \cap \hour_{u_2v_2u_3v_3}=(u_2,v_2)$. 
$\hour_{u_1v_1u_2v_2}$ and $\hour_{u_3v_2u_4v_3}$ are \textit{butterfly connected} through $\hour_{u_2v_2u_3v_3}$ in $G$.

Next, we define \textit{$k$-wing bipartite cohesive subgraphs} \cite{DBLP:conf/wsdm/SariyuceP18}.

\begin{definition}{$k$-wing:}
A bipartite subgraph $H=(U,V,E) \subseteq G$ is a $k$-wing if it satisfies the following conditions: 
\begin{enumerate}
    \item Minimum butterfly constraint: $\forall$ $e$ $\in$ $E(H)$, $\hour(e,H)\ge k$. 
    \item Butterfly Connectivity: $\forall$ $e_{1}$, $e_{2} \in E(H)$, $\exists$ $\hour_{1}$ and $\hour_{2}$ $\in H$ such that $e_{1}\in \hour_{1}$, $e_{2}\in \hour_{2}$, then either $\hour_{1}=\hour_{2}$ or $~\hour_{1}$ and $~\hour_{2}$ are butterfly connected.
    \item Maximality: There is no $H'\subseteq G$ such that $H'$ satisfies the above two conditions while  $H\subseteq H'$. 
\end{enumerate}
\label{def:wing}
\end{definition}
Further, we define the \textit{wing number} for an edge as follows.
\begin{definition}{Wing number ($\psi(e)$):}
For an edge $e$ $\in$ $E$, $\psi(e)$ is the maximum possible $k$ such that there exists a $k$-wing in $G$ containing $e$. 
\label{def:wing_num}
\end{definition}

Note, given $e\in E(G)$, $\psi(e)\le \hour(e,G)$. For instance, in Fig.~\ref{fig:my_label_ego}, the wing number of $(u_2,v_2)$ is $2$. Whereas its support in the graph is $3$.


\begin{example}
Fig. \ref{fig:my_label_ego} shows all the edges $e$ in the bipartite graph $G$, with their respective wing number $\psi$. We observe a $4$-wing in Fig. \ref{fig:my_label_ego} (blue edges) and verify that every edge in a $4$-wing is contained in at least 4 butterflies, any two edges in a $4$-wing are reachable through adjacent
butterflies, and $4$-wing is maximal as well. 
We can also observe that the edges $(u_4,v_5)$ and $(u_5,v_5)$ are having the same wing number of $3$. However, they belong to two different $3$-wings since they are unreachable through adjacent butterflies,  
\end{example}
We now propose the following personalized $k$-wing search problem.
\newline
\\
\noindent\textbf{Problem Definition.} 
Given a bipartite graph $G=(U,V,E)$, a query vertex $q \in U\cup V$ and an integer $k\geq1$, we return all \textit{$k$-wings} containing $q$.
\section{Methodology}
\label{section:methodology}
In this section, we discuss the methods to address the problem of personalized $k$-wing search. 
Firstly, we describe the importance of \textit{bitruss decomposition} and the \textit{BaseLine} approach to address the problem. We also examine the limitations of the \textit{BaseLine} approach. Secondly, we propose \textit{EquiWing} index to address the limitations of \textit{BaseLine}. \textit{EquiWing} is constructed by grouping the edges in a bipartite graph based on \textit{$k$-butterfly connectivity}. It provides the summarized graph which is used for efficient query processing. Thirdly, we propose \textit{$k$-butterfly loose connectivity} and exploit the hierarchy in the $k$-wing to create \textit{EquiWing-Comp}. It further reduces the size of \textit{EquiWing} and improves the query processing.
\subsection{Baseline}
Observing the wing number for an edge is the same as its bitruss number, the baseline solution applies the state-of-the-art bitruss decomposition algorithm for computing the wing number for every edge offline and then performs an online search for answering a personalized $k$-wing query. \\
\noindent\textbf{Offline wing number computation}. The bitruss decomposition algorithm from \cite{DBLP:conf/icde/Wang0Q0020} is used to compute the wing number for all the edges. 
The time complexity of the bitruss decomposition is $\mathcal{O}(d^{2}_{max}|E|)$ for a given bipartite graph $G=(U, V, E)$ where $d_{max}$ is the maximum degree. 
However, we omit a detailed discussion of \cite{DBLP:conf/icde/Wang0Q0020} for the sake of brevity. 
Fig. \ref{fig:my_label_ego} shows $\psi(e)$, $\forall e \in E$.
 
\noindent\textbf{Online search}. For a query vertex $q$ and an integer $k$, the \textit{BaseLine} approach starts the search with every edge $e$ incident on $q$ satisfying $\psi(e)$ $\ge$ $k$. Such incident edges act as the initialized seed edges and 
a BFS is performed starting with these seed-edges. 
For each edge $e$ in the seed, \textit{BaseLine} includes $e$ to the partial result and expands the seed as follows. Using the edge $e$ to be included as the partial result, the BFS considers every $e_{0}$ as a new seed-edge if $e_{0}$ satisfies the constraints: (i) $\psi(e_{0}) \geq k$; (ii) $e_{0}$ forms a butterfly with $e$ (all edges in this butterfly has wing number no less than $k$); (iii) $e_{0}$ is neither in the seed nor in the partial result. 
The expansion stops when no edge can be further included into the partial result, i.e., there is no seed-edge left. 
Due to the constraints applied during the search, edges satisfying the butterfly connectivity can be grouped together inherently. 
As such, the set of \textit{$k$-wings} containing $q$ can be derived by \textit{BaseLine} as the query result. 
The time complexity of \textit{BaseLine} is 
$\mathcal{O}(d^{2}_{max}|E(H)|)$, where $d_{max}$ is the maximum degree and $|E(H)|$ is the number of edges in the \textit{$k$-wings} containing $q$. 
The extra cost of $\mathcal{O}(d^{2}_{max})$ compared to the general BFS is induced by checking the butterfly connectivity constraint.  

\noindent \textbf{Limitations of the BaseLine approach.}
For any edge $(u,v)$ in a butterfly the algorithm needs to check all the edges incident on $u$ and $v$ and see if they satisfy all the constraints. 
This leads to the following two unnecessary operations.
    (i) \textit{Overhead of accessing ineligible edges:} during the search, if the wing number of an edge is less than $k$, it would not be included in the $k$-wing. Therefore an extra unnecessary overhead is required to check the ineligible edges. 
    (ii) \textit{Redundant access of eligible edges:} if an edge $e$ is added into the partial result or seed,
    it is accessed at least extra $3\times k$ times in the BFS, 
    which is an overhead. This is because for each eligible edge $e$ ($\psi(e) \geq k$) of a butterfly, it will be accessed three times while doing the BFS from the other three eligible edges in the same butterfly. 

\subsection{Wing Equivalence}

The limitations of \textit{BaseLine} can be addressed by 
an approach to be proposed, which uses one-time quick access only to the eligible edges. 
The intuitive solution is to group all the eligible edges with the same wing number and then access them. However, it is not necessary for two edges with the equal wing number to co-exists in the same \textit{$k$-wing}. Therefore, we exploit the notion of \textit{butterfly connectivity} and extend it to propose \textit{$k$-butterfly connectivity}. 
Further, the equivalence relationship from \cite{akbas2017truss} is adopted, and based on it, we propose \textit{$k$-wing equivalence}. The equivalence class provides the grouping of only those eligible edges which are $k$-butterfly connected, hence, can co-exist in the same \textit{$k$-wing}. 
As a result, an equivalence based index \textit{EquiWing}, can be developed. To start with, we assign the wing number ($\psi$) to all the edges in a bipartite graph using the algorithm in \cite{DBLP:conf/icde/Wang0Q0020}. 
Now, we define a stronger \textit{butterfly-connectivity} constraint: \textit{$k$-butterfly connectivity}, as follows.
\begin{definition}
\label{def_k-butterfly}
{$k$-butterfly:} A butterfly $\hour_{uvwx}$ in $G$ is a
$k$-butterfly, if $min$ $\{\psi((u,v)),$ $ \psi((u,x)), \psi((w,x)), \psi((w,v))\} \geq k$.
\end{definition}

\begin{algorithm}[t]
\small
	\SetKwInOut{Input}{Input}
	\SetKwInOut{Output}{Output}
	\Input{$G=(U,V,E)$} 
    \Output{Index : $EW( \raisebox{1.5pt}{$\chi$},\Upsilon)$} 
	     \SetKwFunction{FMain}{ Index Construction  }
    \SetKwProg{Fn}{Function}{}{}
    \Fn{\FMain{G=(U,V,E)}}{
        $bitruss\_Decomposition(G)$;\\
       $initialize\_all\_edges()$;\\
        \For{$k \longleftarrow 1$ to $k_{max}$}{
        \While{$\exists e \in \Phi_k$}{
        $e.visited=TRUE$;\\
        Create a super node $\nu$ with $\nu.snID \leftarrow ++snID$;\\
        $\raisebox{1.5pt}{$\chi$} \leftarrow \raisebox{1.5pt}{$\chi$} \cup \{\nu\}$ or \{$\raisebox{1.5pt}{$\chi$}$ $[k] \leftarrow \raisebox{1.5pt}{$\chi$}$ $ [k] \cup \{\nu\}$\};\\
        $Q.enqueue(e)$;\\
        \While{$Q\neq \emptyset$}{
        $x(u,v)\leftarrow Q.dequeue()$;
        $\nu \leftarrow \nu \cup \{x\}$;\\
        \ForEach{$id \in x.list$}{
        Create an edge ($\mu,\nu$) where $\mu$ is an existing super node with $\mu .snID=id$.\\
        $\Upsilon \leftarrow \Upsilon \cup \{(\mu,\nu)\}$\\
        }
        \ForEach{$e' \in$ edges incident on $x$}{
        \If{$\psi(e') \geq k$ }{
        $ButterFly(x,e',Q)$
        }
        }
        }
       
        }
        }
        \textbf{return $EW( \raisebox{1.5pt}{$\chi$},\Upsilon)$};
}
     \SetKwFunction{FMain}{ButterFly}
    \SetKwProg{Fn}{Function}{}{}
    \Fn{\FMain{$x,e',Q$}}{
         \ForEach{$e'' \in $ edges incident on $x$ and $e'$}{
         \If{$\psi(x')\geq k$ and $\psi(e'')\geq k$}{
         \tcc{$e',e'',x,x'$ form a $k$-butterfly.}
         Process($e'',Q$);\tcc{Invoke Process, for $x'$ and $e'$}
         }
         }
         }

	\SetKwFunction{FMain}{Process}
    \SetKwProg{Fn}{Function}{}{}
    \Fn{\FMain{$e''',Q$}}{
        \eIf{$\psi(e''')=k$}{
        \If{$e'''.visited=FALSE$}{
            $e'''.visited=TRUE$;
            $Q.enqueue(e''')$;\\
            }
        }
        {
        \If{$snID \not\in e'''.list$}{
		$e'''.list \leftarrow e'''.list \cup {snID}$}
		}

}
\caption{Index Construction (EquiWing)}
\label{Algorithm:EquiWIng}
\end{algorithm}

\begin{definition}
\label{def_k-butterfly_connected}
{$k$-butterfly connectivity ($\xLeftrightarrow{  k}$):} Given two $k$-butterflies $\hour_{x}$ and $\hour_{y}$ in G, they are $k$-butterfly connected if there exists a sequence of $n\geq 2$ $k$-butterflies: $\hour_1,...,\hour_n$ s.t. $\hour_{x}=\hour_{1}$, $\hour_{y}=\hour_{n}$ and for
$1\leq i<n,\hour_i \cap \hour_{i+1}\neq \emptyset$ and $\exists  e \in \hour_i \cap \hour_{i+1}$, $\psi(e) = k$.
\end{definition}

\begin{example} Consider the bipartite graph $G$ in Fig. \ref{fig:my_label_ego}, and butterflies $\hour_{u_5v_6u_6v_7}$ and $\hour_{u_6v_7u_7v_8}$. They 
are 4-butterfly connected as they share a common edge with the wing number $4$, i.e. $\hour_{u_5v_6u_6v_7} \cap \hour_{u_6v_7u_7v_8}=\{(u_6,v_7)\}$ and $\psi((u_6,v_7))=4$. 
 \end{example}

Now, we are ready to define the \textit{$k$-wing equivalence} for a pair of $e_{1},e_{2} \in E$.

\begin{definition}{$k$-wing equivalence:} Given any two edges $e_1, e_2 \in E$, they are $k$-wing equivalent, denoted by $e_1$ $\equi$ $e_2$, if (1) $\psi (e_1)=\psi (e_2)=k$, and (2) $\exists$ $\hour_{1}$ and $\hour_{2}$ $\in H$ such that $e_{1}\in \hour_{1}$, $e_{2}\in \hour_{2}$, then 
$~\hour_{1} \xLeftrightarrow{k}~\hour_{2}$.

\label{def:k-wing_equivalence}
\end{definition}

\begin{theorem}
$k$-wing equivalence is an equivalence relationship upon $E(G)$. 
\end{theorem}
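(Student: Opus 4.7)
The plan is to verify the three axioms of an equivalence relation (reflexivity, symmetry, transitivity) directly from Definitions~\ref{def_k-butterfly_connected} and~\ref{def:k-wing_equivalence}, treating each axiom as a small lemma about $k$-butterfly sequences. Throughout, let $k = \psi(e_1) = \psi(e_2) = \psi(e_3)$ whenever all edges are compared at the same wing level (otherwise the relation is vacuously irrelevant).

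For \textbf{reflexivity}, I would first observe that any $e$ with $\psi(e) = k \geq 1$ lies in at least one butterfly $\hour$; moreover, by the definition of wing number, $e$ must belong to some $k$-wing, and every butterfly inside that $k$-wing has all four edges with wing number at least $k$, hence qualifies as a $k$-butterfly under Definition~\ref{def_k-butterfly}. Pick such a $k$-butterfly $\hour \ni e$ and take the length-two sequence $\hour_1 = \hour_2 = \hour$. Then $\hour_1 \cap \hour_2 = \hour$ is nonempty and contains the edge $e$ with $\psi(e) = k$, so $\hour \xLeftrightarrow{k} \hour$ holds, witnessing $e \equi e$.

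\textbf{Symmetry} is routine: if the sequence $\hour_1, \ldots, \hour_n$ certifies $\hour_1 \xLeftrightarrow{k} \hour_n$ with the required shared wing-$k$ edges at each junction, then reversing the sequence to $\hour_n, \ldots, \hour_1$ preserves each adjacency condition, so $\hour_n \xLeftrightarrow{k} \hour_1$ holds as well, and swapping the roles of $e_1$ and $e_2$ in Definition~\ref{def:k-wing_equivalence} gives $e_2 \equi e_1$. For \textbf{transitivity}, assume $e_1 \equi e_2$ via $\hour_a \ni e_1$, $\hour_b \ni e_2$ with sequence $\hour_a = \hour_1, \ldots, \hour_m = \hour_b$, and $e_2 \equi e_3$ via $\hour'_a \ni e_2$, $\hour'_b \ni e_3$ with sequence $\hour'_a = \hour'_1, \ldots, \hour'_p = \hour'_b$. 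I would concatenate these into $\hour_1, \ldots, \hour_m, \hour'_1, \ldots, \hour'_p$ and check just the new junction $(\hour_m, \hour'_1)$: since $e_2 \in \hour_b \cap \hour'_a = \hour_m \cap \hour'_1$ and $\psi(e_2) = k$ by hypothesis, the shared wing-$k$-edge requirement is met automatically. All other junctions inherit their validity from the two original sequences, and all butterflies in the combined sequence are $k$-butterflies, yielding $\hour_a \xLeftrightarrow{k} \hour'_b$ and hence $e_1 \equi e_3$.

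The \emph{main obstacle} is really a bookkeeping point rather than a deep one: at the concatenation step one must confirm that a wing-$k$ bridge edge exists, not merely a nonempty intersection, and this is precisely where the condition $\psi(e_2) = k$ (as opposed to $\psi(e_2) \geq k$) built into Definition~\ref{def:k-wing_equivalence} pays off. The reflexivity case hides a similar subtlety, namely that $e$ itself must sit in a $k$-butterfly available to serve as the length-two self-loop, which again is resolved by $e$'s participation in its defining $k$-wing. Once these two points are dispatched, the proof is little more than sequence concatenation and reversal.
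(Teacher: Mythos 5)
Your proof is correct and follows essentially the same route as the paper's: reflexivity via the existence of a $k$-butterfly containing $e$ inside its defining $k$-wing, symmetry by reversing the certifying sequence, and transitivity by concatenating the two sequences and noting that $e_2$ itself, with $\psi(e_2)=k$, supplies the required wing-$k$ bridge edge at the new junction. Your version is if anything slightly cleaner, since the single concatenation argument subsumes the paper's separate case analysis for short versus long sequences.
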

\begin{proof}
To prove that $e_1 \equi e_2$ is an equivalence relationship, we prove that it is reflexive, symmetric and transitive.\\
\noindent\textbf{Reflexivity.} Consider an edge $e_1 \in E$, s.t. $\psi (e_1) = k$. Utilizing Definition \ref{def:wing}, there exists at least one subgraph $G'=(U',V', E') \subseteq G$ s.t. $e_1 \in E'$, and $\forall e \in E'$, $\psi (e) \geq k$. Since there exist at least one \textit{$k$-butterfly} $\hour \subseteq G'$ s.t. $e_1 \in  \hour$. Namely, $e_1 \equi e_1$.
\\
\noindent\textbf{Symmetry.} Consider two edges $e_1, e_2 \in E$, $e_1 \equi e_2$. That is, $\psi (e_1) = \psi (e_2) = k$, and either of the following cases holds if so $e_1 \equi e_2$: (1) $e_1$ and $e_2$ are in the same \textit{$k$-butterfly}; (2) there exist two \textit{$k$-butterflies} $\hour_1$ and $\hour_2$, such that $e_1 \in \hour_1$, $e_2 \in \hour_2$, and $\hour_1 \xLeftrightarrow{k} \hour_2$. For this case, note that \textit{$k$-butterfly connectivity} is symmetric, so $\hour_2  \xLeftrightarrow{k} \hour_1$. Namely, $e_2 \equi e_1$.
\\
\noindent\textbf{Transitivity.} Consider three edges $e_1$, $e_2$, $e_3 \in E$, s.t. $e_1 \equi e_2$ and $e_2 \equi e_3$. Namely, $\psi (e_1) = \psi (e_2) = \psi (e_3) = k$, and one of the following cases holds: (1) there exist two \textit{$k$-butterflies} $\hour_1$ and $\hour_2$, s.t. $e_1$, $e_2 \in \hour_1$ and $e_2$, $e_3 \in \hour_2$. If $e_1$ and $e_3$ are located in the same \textit{$k$-butterfly}, then $e_1 \equi e_3$. 
Or else, $\hour_1 \cap \hour_2 = \{e_2\}$ and $\psi (e_2) = k$, so $\hour_1 \xLeftrightarrow{k} \hour_2$. Hence, $e_1 \equi e_3$; 
(2) there exist $s$ \textit{$k$-butterflies} $\hour_{x_1}, \ldots , \hour_{x_s}$ in $G$, s.t. $e_1 \in \hour_{x_1}$, $e_2 \in \hour_{x_s}$, and all the edges joining these $s$ consecutive \textit{$k$-butterflies} are with the same \textit{wing number}, $k$. Meanwhile, there exist $t$ \textit{$k$-butterflies} $\hour_{y_1}
, \ldots , \hour_{y_t}$ in $G$ s.t. $e_2 \in \hour_{y_1}$, $e_3 \in \hour_{y_t}$ and all the edges joining these $t$ \textit{$k$-butterflies} are with the same \textit{wing number}, $k$. If $\hour_{x_s} = \hour_{y_1}$, we know that $\hour_{x_1} \xLeftrightarrow{k} \hour_{y_t}$ through a series of $(s + t - 1)$ adjacent \textit{$k$-butterflies} $\hour_{x_1}, \ldots , \hour_{x_s}, \ldots , \hour_{y_t}$. Or else, we know that $\hour_{x_s} \cap \hour_{y_1} = {e_2}$ and $\psi (e_2) = k$, so $\hour_{x_1} \xLeftrightarrow{k} \hour_{y_t}$ through a series of $(s + t)$ adjacent \textit{$k$-butterflies} $\hour_{x_1}, \ldots , \hour_{x_s}, \hour_{y_1}, \ldots , \hour_{y_t}$. Hence, $e_1 \equi e_3$.
\end{proof}

Given an edge $e \in E, \psi(e)=k$, the set of edges 
$\mathbb{C}_e=$\{$x|x \equi e, x\in E$\}, 
defines the equivalence class of $e$. 
The set of all equivalence classes forms a mutually exclusive and collectively exhaustive partition of $E$. Each equivalence class $\mathbb{C}_e$ is composed of edges with the same wing number, $k$, that are \textit{$k$-butterfly connected}. Therefore an equivalence class $\mathbb{C}_e$ forms the basic unit for our personalized $k$-wing search.

\subsection{EquiWing ($EW$)}
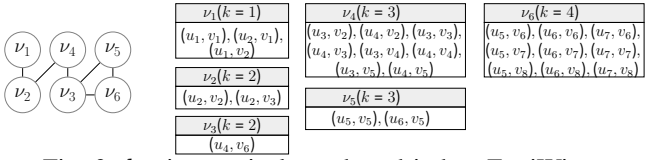
\begin{figure}[]
    \centering
\hspace{-180pt}
\begin{subfigure}{0.1\textwidth}
\scalebox{0.3}{
\centering
\begin{tikzpicture}
\node (1) at (0,2) [circle,draw=black!50,very thick,inner sep=7pt] { \Huge $\nu_1$};
\node (2) at (0,0) [circle,draw=black!50,very thick,inner sep=7pt] {\Huge $\nu_2$};
\node (3) at (2,0) [circle,draw=black!50,very thick,inner sep=7pt] {\Huge $\nu_3$};
\node (4) at (2,2) [circle,draw=black!50,very thick,inner sep=7pt] {\Huge $\nu_4$};
\node (5) at (4,2) [circle,draw=black!50,very thick,inner sep=7pt] {\Huge $\nu_5$};
\node (6) at (4,0) [circle,draw=black!50,very thick,inner sep=7pt] {\Huge $\nu_6$};
\path[]
(1) edge [very thick] (2)
(2) edge [very thick] (4)
(3) edge [very thick] (4)
edge [very thick] (5)
edge [very thick] (6)
(5)edge [very thick] (6);

\end{tikzpicture}%
}

\end{subfigure}
\hspace{.51em}
\large
\begin{subfigure}{0.\textwidth}
    \scalebox{0.5}{
    \begin{tikzpicture}
    \draw [fill=grey!20] (-2.5,2) rectangle (0.5,2.5) ;
    \node (V1) at ( -1,2.25) [] { $\nu_1(k=1) $};
    \draw [] (-2.5,1.05) rectangle (0.5,2) ;
    \node (V1') at ( -1,1.65) [] { $(u_1,v_1),(u_2,v_1),$};
    \node (V1') at ( -1,1.25) [] { $(u_1,v_2)$};
    
    \draw [fill=grey!20] (-2.5,0.3) rectangle (0.5,.8) ;
    \node (V2) at ( -1,0.55) [] { $\nu_2(k=2) $};
    \draw [] (-2.5,-.3) rectangle (0.5,0.3) ;
    \node (V2') at ( -1,0) [] { $(u_2,v_2),(u_2,v_3)$};
    
    \draw [fill=grey!20] (-2.5,-1) rectangle (0.5,-.5) ;
    \node (V3) at ( -1,-0.75) [] { $\nu_3(k=2) $};
    \draw [] (-2.5,-1.5) rectangle (0.5,-1) ;
    \node (V3') at ( -1,-1.25) [] { $(u_4,v_6)$};
    \hspace{20pt}
    
    \draw [fill=grey!20] (0.25,2) rectangle (4.5,2.5) ;
    \node (V4) at ( 2,2.25) [] { $\nu_4(k=3) $};
    \draw [] (0.25,0.55) rectangle (4.5,2) ;
    \node (V4') at ( 2.35,1.75) [] { $(u_3,v_2),(u_4,v_2),(u_3,v_3),$};
    \node (V4') at ( 2.35,1.25) [] { $(u_4,v_3),(u_3,v_4),(u_4,v_4),$};
    \node (V4') at ( 2.35,.75) [] { $(u_3,v_5),(u_4,v_5)$};

    \draw [fill=grey!20] (0.25,-.25) rectangle (4.5,0.25) ;
    \node (V5) at ( 2,0) [] { $\nu_5(k=3) $};
    \draw [] (0.25,-.85) rectangle (4.5,-.25) ;
    \node (V5') at ( 2.25,-0.55) [] { $(u_5,v_5),(u_6,v_5)$};
    
     \draw [fill=grey!20] (5,2) rectangle (9.25,2.5) ;
    \node (V5) at ( 6.75,2.25) [] { $\nu_6(k=4) $};
    \draw [] (5,0.55) rectangle (9.25,2) ;
    \node (V5') at ( 7.1,1.75) [] { $(u_5,v_6),(u_6,v_6),(u_7,v_6),$};
    \node (V5') at ( 7.1,1.25) [] { $(u_5,v_7),(u_6,v_7),(u_7,v_7),$};
    \node (V5') at ( 7.1,.75) [] { $(u_5,v_8),(u_6,v_8),(u_7,v_8)$};
 
\end{tikzpicture}%

}
\vspace{-20pt}
\end{subfigure}
\caption{$k$-wing equivalence based index, EquiWing }
          \label{fig:EquiWing}
\vspace{-10pt}    
\end{figure}
We introduce \textit{EquiWing} based solution. We first define the \textit{EquiWing} index structure using the $k$-wing equivalence. Secondly, we devise the construction algorithm to develop the \textit{EquiWing} index. Thirdly, we propose an algorithm for the personalized $k$-wing search using \textit{EquiWing}. Last but not least, we also discuss the time and space complexities of algorithms for index construction and $k$-wing search.

\noindent\textbf{EquiWing index}. The idea behind \textit{EquiWing} is to exploit the \textit{$k$-wing} relationships for any possible $k$ wherein we form the equivalence class and summarize the given bipartite graph $G$ to a super graph $EW(\raisebox{1.5pt}{$\chi$},\Upsilon)$. 
To distinguish $EW$ from $G$, we use the term super node to represent a $\nu \in \raisebox{1.5pt}{$\chi$}$. 
Each of the super nodes $\nu \in \raisebox{1.5pt}{$\chi$}$ represents a separate equivalence class $\mathbb{C}_e$ where $e \in E$, and a super edge $(\mu, \nu) \in  \Upsilon$, where $\mu, \nu \in \raisebox{1.5pt}{$\chi$}$, represents the two equivalence classes connected that are 
\textit{butterfly connected}. 

\noindent \textbf{Index construction.} 
 Given a bipartite graph $G$, Algorithm \ref{Algorithm:EquiWIng} is the pseudo code to construct the \textit{EquiWing} index. We start the algorithm by computing the wing number for $\forall e \in E$ (line 2), then we initialize the values to the attributes of those edges (\textit{visited}: Boolean type, symbolizes whether the edge has been examined; \textit{list}: a set of super nodes, all the super nodes which have already been explored and are connected via series of \textit{$k$-butterflies} to the current super node) and segregate the edges in distinct sets ($\Phi_k$) based on their wing number $k$ (line 3). We then examine all the edges $e \in E$, using the sets $\Phi_k$, from $k=1$ to $k=k_{max}$ (line 4). For the selected edge $e \in \Phi_k$, we create a new super node $\nu$ for its corresponding equivalence class (line 7-8). To explore all the edges within the same equivalence class $\mathbb{C}_e$ 
 , we perform BFS using $e$ as a starting edge (line 10-17). During the exploration for the edge $e$, we also check for any previously explored $\mathbb{C}_e'$ in $e.list$, if so we form the super edges between the super nodes (line 12-14).
Meanwhile, throughout the BFS, $\forall e:\psi(e) > k$, the edge $e$ stores the current super node ($snID$) in $e.list$ (line 28-29).
\begin{example}
The EquiWing of bipartite graph $G$ (Fig. \ref{fig:my_label_ego}) is shown in Fig. \ref{fig:EquiWing}. It contains 6 super nodes and each super node corresponds to a $k$-wing equivalence class in which an edge of $G$ participates. We observe in Fig. \ref{fig:EquiWing} tabulated form, that all the edges from $G$ are contained in \textit{EquiWing}, within super nodes. For example, $\nu _4$ represents $8$ edges which are $3$-butterfly connected edges in G, and form a \textit{3-wing}. Moreover, we also observe that \textit{EquiWing} contains 6 super edges depicting the butterfly connectivity between super nodes. 
\end{example}
\noindent \textbf{Time complexity.} 
Algorithm \ref{Algorithm:EquiWIng} performs BFS to enumerate all the \textit{butterflies} for each edge. Moreover, each edge needs to be checked for \textit{$k$-butterfly connectivity}, which takes $\mathcal{O}(d^{2}_{max})$. 
Therefore, the total time complexity of Algorithm \ref{Algorithm:EquiWIng} is $\mathcal{O}(d^{2}_{max}|E(G)|)$.

\noindent\textbf{Space complexity.} For an edge $e \in E$, $|e.list|$ is the number of super nodes which are butterfly connected to $\nu$, where $e \in \nu$. Therefore, $e.list$ can take at most $\mathcal{O}(|E(G)|)$ space. This memory is free once $e$ is processed. Hence, the space complexity of Algorithm \ref{Algorithm:EquiWIng} is $\mathcal{O}(|E(G)|)$.

Note that during the creation of the \textit{EquiWing} all the \textit{$k$-wings} are completely retained, and the $\xLeftrightarrow{}$ between two \textit{$k$-wings} is also maintained through the super edges. Therefore, we consider \textit{EquiWing} as self-sufficient to identify all the \textit{$k$-wings} as it contains all the critical information.

\begin{algorithm}[t]
\small
	\SetKwInOut{Input}{Input}
	\SetKwInOut{Output}{Output}
	\Input{$EW( \raisebox{1.5pt}{$\chi$},\Upsilon)$, $q$ and hash structure $H(q)$.} 
	\Output{$\mathbb{W}$: all $k$-wings containing $q$.} 
	\SetKwFunction{FMain}{EquiWing\_search }
    \SetKwProg{Fn}{Function}{}{}
    \Fn{\FMain{$EW( \raisebox{1.5pt}{$\chi$},\Upsilon)$,q,H(q)}}{
        $mark\_all\_nodes\_false()$; $l\longleftarrow 0$;\\
       \tcc{Do BFS on EquiWing using seeds. }
        \ForEach{$\nu \in H(q)$}{
        \If{$\psi(\nu) \geq k$ and $\nu .visited=FALSE$ }{
        $Q \longleftarrow \emptyset$; $Q.enqueue(\nu)$;  $l \longleftarrow l+1$;\\
        $\nu.visited=TRUE$; $\mathbb{W}_l \longleftarrow \emptyset$;\\ 
        \While{$Q \neq \emptyset$}{
        $\nu\longleftarrow Q.dequeue()$; $\mathbb{W}_l \longleftarrow \mathbb{W}_l \cup \{e|e\in \nu\}$;\\
        
        \ForEach{$(\nu,\mu) \in \Upsilon$}{
        \If{$\psi(\mu)\geq k$ and $\mu.visited=FALSE$}{
        $\mu.visited=TRUE;$ $Q.enqueue(\mu);$
        }
        }
        }
        }
        }
  \textbf{return} \{$\mathbb{W}_1$, ..., $\mathbb{W}_l$\};      
}

\caption{Cohesive group search using EquiWing}
\label{Algorithm:Community_EquiWing}
\end{algorithm}
\noindent \textbf{Query processing.} 
Algorithm \ref{Algorithm:Community_EquiWing} represents the pseudo code for processing a personalized $k$-wing search query using \textit{EquiWing}. 
To start our search we first find the super node which contains the query vertex $q$. This is achieved by maintaining a hash structure ($H(q)$) which has the vertex as its key and the values correspond to the list of super nodes $\nu$ containing the vertex $q$. This structure can be built along with \textit{EquiWing} construction. Algorithm \ref{Algorithm:Community_EquiWing} starts with a super node $\nu \in H(q)$ with $\psi(\nu) \geq k$, then explores all the neighbors $\mu$ using BFS in \textit{EquiWing} s.t. $\forall \mu \in \Gamma(\nu), \psi(\mu) \geq k$. In the end, all the super nodes connected to $\nu$ via \textit{$k$-butterflies} form a $k$-wing stored in $\mathbb{W}_l$. The exploration continues until all $\nu \in H(q)$ are visited. 
Algorithm \ref{Algorithm:Community_EquiWing} accesses only the edges in the super node $\nu$ where where $\psi(\nu) \geq k$ and each edge is accessed only once when reported as output in $\mathbb{W}_i$.
Therefore, the time complexity of Algorithm \ref{Algorithm:Community_EquiWing} is $\mathcal{O}(|E(H)|)$, where $|E(H)|$ is the number of edges in the resulting $k$-wings. 
\begin{figure}[t]
\centering

\begin{subfigure}{0.2\textwidth}
\vspace{-1.2em}
\centering
\scalebox{0.32}{
\begin{tikzpicture}
\node (1) at (0,2) [circle,draw=black!50,very thick,inner sep=5pt] {\Huge $ \nu_1$};
\node (2) at (0,0) [circle,draw=black!50,very thick,inner sep=5pt] {\Huge$\nu_2$};
\node (3) at (2,0) [circle,draw=black!50,very thick,inner sep=5pt] {\Huge $\nu_3$};
\node (4) at (2,2) [circle,draw=black!50,very thick,fill=red!20,inner sep=5pt] {\Huge $\nu_4$};
\node (5) at (4,2) [circle,draw=black!50,very thick,fill=blue!20,,inner sep=5pt] { \Huge $\nu_5$};
\node (6) at (4,0) [circle,draw=black!50,very thick,fill=blue!20,inner sep=5pt] {\Huge $\nu_6$};
\path[]
(1) edge [very thick] (2)
(2) edge [very thick] (4)
(3) edge [very thick] (4)
edge [very thick] (5)
edge [very thick] (6)
(5)edge [very thick] (6);

\end{tikzpicture}%
}
\caption{EquiWing }
\label{fig:EquiWing_community_a}
\end{subfigure}
\hspace{1em}
\vspace{-1.7em}
\begin{subfigure}{0.22\textwidth}
\centering
\scalebox{0.21}{
\begin{tikzpicture}
\node (v_5) at ( 0,6) [circle,draw=black!50,fill=grey!20,very thick,inner sep=7pt] {\Huge $v_5$};
\foreach \name/ \x in {v_2/-6,v_3/-4,v_4/-2,v_6/2,v_7/4,v_8/6}
\node (\name) at ( \x,6) [circle,draw=black!50,very thick,inner sep=7pt] {\Huge$\name$};
\foreach \name/ \x in {u_3/-4.5,u_4/-2,u_5/.5,u_6/3,u_7/5.5}
\node (\name) at (\x,2) [circle,draw=black!50,very thick,inner sep=7pt] {\Huge $\name$};
\path[]
(v_2)
edge  [very thick,red](u_3)
edge  [very thick,red](u_4)
(v_3)
edge  [very thick,red](u_3)
edge  [very thick,red](u_4)
(v_4)
edge  [very thick,red](u_3)
edge  [very thick,red](u_4)
(v_5)
edge  [very thick,red](u_3)
edge  [very thick,red](u_4)
edge  [very thick,blue](u_6)
edge  [very thick,blue](u_5)
(v_6) edge  [very thick,blue](u_7)
edge  [very thick,blue](u_5)
edge  [very thick,blue](u_6)
(v_7) edge  [very thick,blue](u_7)
edge  [very thick,blue](u_5)
edge  [very thick,blue](u_6)
(v_8) edge  [very thick,blue](u_7)
edge  [very thick,blue](u_5)
edge  [very thick,blue](u_6);
\end{tikzpicture}%
}

\caption{$3$-wings for $v_5$}
\label{fig:EquiWing_community_b}
\vspace{15pt}
\end{subfigure}
\caption{$3$-wing search for $v_5$ using EquiWing}
\label{fig:EquiWing_community}
\vspace{-10pt}
\end{figure}
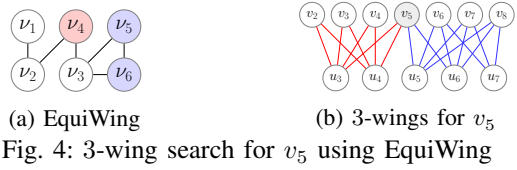
\begin{example}
Fig. \ref{fig:EquiWing_community} shows the $k$-wing search for the query vertex $q=v_5$ and $k=3$. We first locate the super nodes $H(v_5)$ i.e. $\nu_4$ and $\nu_5$. Starting with $\nu_4$, we firstly verify $\psi(\nu_4) \geq 3$. Since it satisfies the condition, we add all the edges of $\nu_4$ to $\mathbb{W}_1$, then we explore the neighbors of $\nu_4$. We find that none of $\mu \in \Gamma(\nu_4)$ has $\psi(\mu) \geq 3$, therefore, we report $\mathbb{W}_1$ as the first $k$-wing and move to $\nu_5$. Similarly, we explore $\nu_5$ and $\Gamma(\nu_5)$ and report $\mathbb{W}_2$ as the second $k$-wing. Hence, the query result is $\mathbb{W}=\{\mathbb{W}_1+\mathbb{W}_2\}$.
\end{example}

\subsection{Compressing EquiWing}
\textit{EquiWing} is an efficient index, however, there are still some limitations. Firstly, unnecessary segregation exists for super nodes which always occur together in a \textit{$k$-wing}. 
Secondly, the unexplored hierarchical properties of \textit{$k$-wing} can be used to reduce the index size. 
Therefore, to address these two challenges we propose the notion of \textit{$k$-butterfly loose connectivity} and exploit the hierarchical property among $k$-wings to form a more compressed index.\\
\noindent\textbf{Combining super nodes.} The notion of \textit{$k$-butterfly connectivity} is used for summarizing a bipartite graph to form \textit{EquiWing}, however, it is too strict which results in unnecessary segregation of super nodes. For example, in Fig. \ref{fig:EquiWing}, 
$\nu_2$ and $\nu_3$ will participate in the same $k$-wing, thus can be combined. As a result, we propose a relaxed version of \textit{$k$-butterfly connectivity} i.e. \textit{$k$-butterfly loose connectivity}, as follows. 
\begin{definition}
{$k$-butterfly loose connectivity $( \xLeftrightarrow{\geq k})$:} Two super nodes $\nu_x$ and $\nu_y$ ($\psi(\nu_x)=\psi(\nu_y)=k$) in \textit{EquiWing} are $k$-butterfly loose connected if there exists a sequence of $n > 2$ connected super nodes $\nu_1,...,\nu_n$ s.t. $\nu_x=\nu_1$, $\nu_y=\nu_n$ and 
$\forall \nu_i,\psi(\nu_i) \geq k$.
\label{def:loose_connectivity}
\end{definition}

We can observe in Fig. \ref{fig:EquiWing}, super node $\nu_2 \xLeftrightarrow{\geq k} \nu_3$. $\nu_2$ is connected to $\nu_3$ via $\nu_4$ and $\psi(\nu_4)\geq \psi(\nu_2)=\psi(\nu_3)=3$. Since we already have \textit{EquiWing} at our disposal, we use $ \xLeftrightarrow{\geq k}$ to compress the \textit{EquiWing} using the following theorem:
\begin{theorem}
\label{theorem:super_nodes}
If any two super nodes $\nu, \mu \in \raisebox{1.5pt}{$\chi$}$ are \textit{$k$-butterfly loose connected}. Then $\nu$ and $\mu$ can be combined together.
\end{theorem}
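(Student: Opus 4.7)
The plan is to show that combining $\nu$ and $\mu$ preserves every query answer produced by Algorithm \ref{Algorithm:Community_EquiWing}, so no information is lost by merging them into a single super node. Concretely, I would argue that for every possible query parameter $k'$ and every query vertex, the two super nodes are always either both reported in the same $k'$-wing output or both suppressed.

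First I would unfold Definition \ref{def:loose_connectivity} to obtain a path $\nu = \nu_1, \nu_2, \ldots, \nu_n = \mu$ in \textit{EquiWing} whose intermediate super nodes all satisfy $\psi(\nu_i) \geq k$, together with $\psi(\nu) = \psi(\mu) = k$. Next, I would do a case analysis on $k'$. If $k' > k$, then $\psi(\nu) = \psi(\mu) = k < k'$, so neither qualifies for the BFS seed nor for expansion in Algorithm \ref{Algorithm:Community_EquiWing} (lines 4 and 10); a merged super node carrying the same wing number $k$ is likewise suppressed, so nothing changes. If $k' \le k$, then every $\nu_i$ on the path satisfies $\psi(\nu_i) \ge k \ge k'$, so once the BFS reaches $\nu$ it will traverse the path $\nu_1, \ldots, \nu_n$ and reach $\mu$ (and vice versa); hence $\nu$ and $\mu$ lie in the same returned component $\mathbb{W}_l$. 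In both cases $\nu$ and $\mu$ are output together, so replacing them by a single super node (inheriting wing number $k$ and the union of their butterfly-connectivity neighbours in $\Upsilon$) cannot change any query answer.

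It remains to justify that such a merge is internally consistent with the $k$-wing semantics. For this I would observe that every edge of $\nu$ and every edge of $\mu$ has wing number exactly $k$, and that by walking along the $k$-butterfly connections that induce the super edges $(\nu_i,\nu_{i+1})\in \Upsilon$, any edge in $\nu$ is butterfly-connected (through the intermediate super nodes whose edges all have wing number $\ge k$) to any edge in $\mu$ within the $k$-wing obtained at level $k$. Thus the merged super node still represents a set of edges that share the same wing number and that are pairwise butterfly connected inside the $k$-wing subgraph, matching the invariant of a super node.

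The main obstacle is the second paragraph of the argument: carefully showing that even though the path goes through super nodes $\nu_i$ with strictly larger wing number, the butterfly-connectivity witnesses still reside inside the $k$-wing level and are therefore legitimately available to connect edges of $\nu$ with edges of $\mu$. Once this is pinned down, the case analysis on $k'$ is routine, and re-routing the super edges of $\nu$ and $\mu$ to the merged node is a straightforward bookkeeping step that does not affect any BFS traversal of Algorithm \ref{Algorithm:Community_EquiWing}.
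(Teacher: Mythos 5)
Your proof is correct, and it reaches the same essential conclusion as the paper's --- namely that $k$-butterfly loose connected super nodes always co-occur (both reported or both suppressed) in every query answer, so merging them is lossless --- but it gets there by a genuinely different route. The paper argues structurally from Definition~\ref{def:wing}: by maximality, any $k$-wing containing the edges of $\nu$ must absorb every chain of super nodes with $\psi \geq k$ reachable from $\nu$, hence must absorb $\mu$, so the two always appear in the same resulting $k$-wing. You instead argue operationally, by showing that Algorithm~\ref{Algorithm:Community_EquiWing} produces identical output before and after the merge, via an explicit case split on the query parameter $k'$ (both suppressed when $k' > k$; both swept into the same $\mathbb{W}_l$ by the BFS when $k' \leq k$ since every node on the connecting path passes the $\psi(\nu_i) \geq k \geq k'$ test). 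Your version is the more complete of the two: the paper's proof as written actually derives loose connectivity as a conclusion rather than using it as the hypothesis, and it never explicitly treats queries at levels other than $k$, whereas you handle all $k'$ and also address the one genuinely delicate point --- that after merging, the edges of $\nu$ and $\mu$ are no longer $k$-wing equivalent in the strict sense of Definition~\ref{def:k-wing_equivalence}, but remain butterfly connected \emph{within the $k$-wing subgraph} through the intermediate super nodes, which is all that correctness of the index requires. What the paper's structural argument buys is brevity and independence from the particular search algorithm; what yours buys is a verifiable invariance statement about the index's observable behaviour.
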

\begin{proof}

Let there be two super nodes $\nu, \mu \in \raisebox{1.5pt}{$\chi$}$, with $\psi (\nu)= \psi ( \mu )=k$. Using Definition \ref{def:wing}, a $k$-wing for $\nu$ will always include all connected series of super nodes $\nu_i$, with $\psi(\nu_i) \geq k$. Consequently, it will also include super nodes with $\psi(\nu_i)=k$ i.e. $\mu$, connected via series of super nodes $\nu_1,\nu_2,...,\nu_i$, where $\forall \psi(\nu_i) \geq k$. Using Definition \ref{def:loose_connectivity}, we deduce that $\nu$ and $\mu$ are $k$-butterfly loose connected. 
We can also observe that $\nu$ and $\mu$ will always occur together in the resulting $k$-wing. 
Therefore, we conclude that combining them will make the index compact without compromising its correctness.
\end{proof}

\noindent\textbf{Integrating the hierarchy structure to EquiWing.} The combining of super nodes would eventually reduce the size of \textit{EquiWing}, however, comparing each pair of super nodes in \textit{EquiWing} could be costly. Fortunately, $k$-wing structure possesses a crucial hierarchy property that allows to expedite the combining of super nodes. The hierarchy property states that a \textit{($k+1$)-wing} is a subgraph of \textit{$k$-wing} \cite{zou2016bitruss}. 
The super nodes in the index \textit{EquiWing} can be segregated into different hierarchy levels based on the $\psi$. Fig. \ref{fig:EquiWing-Comp_index} represents the levels in the \textit{EquiWing}. 
Since only the super nodes with same value of $\psi$ can be combined, we compare the pairs only at the same hierarchy level. Also the application of Theorem \ref{theorem:super_nodes} can be confined to each hierarchical level with at least two super nodes. The hierarchy structure allows to combine the super nodes starting from the level $k_{max}$ to the lowest level.

\subsection{EquiWing-Comp ($EW \raisebox{0.5pt}{-} C$)}

\begin{figure}[]
\begin{subfigure}{0.15\textwidth}
\centering
\scalebox{0.35}{
\begin{tikzpicture}
\node (v_1') at ( -4,6)  {\Huge $k=1$};
\draw (0,6) ellipse (2.75cm and .75cm);
\node (v_1) at ( 0,6) [ thick, draw, ellipse, minimum width=50pt,
    align=center] {\Huge $\nu_1$};

\node (v_2') at ( -4,4)  {\Huge $k=2$};
\draw (0,4) ellipse (2.75cm and .75cm);
\node (v_2) at ( 0,4) [ thick,draw, ellipse, minimum width=50pt,
    align=center]{\Huge $\nu_2$};

\node (v_5) at ( -4,0)  {\Huge $k=4$};
\draw (0,0) ellipse (2.75cm and .75cm);
\node (v_5) at ( 0,0) [ thick, draw,ellipse, minimum width=50pt,
    align=center]{\Huge $\nu_5$};

\node (v_3') at ( -4,2)  {\Huge $k=3$};
\draw (0,2) ellipse (2.75cm and .75cm);
\node (v_4) at ( 1,1.85) [ thick, draw, ellipse, minimum width=50pt,
    align=center]{\Huge $\nu_4$};
\node (v_3) at ( -1.1,2.1) [ thick, draw, ellipse, minimum width=50pt,
    align=center]{\Huge $\nu_3$};
\path[]
(v_1) edge [very thick] (v_2)
(v_2) edge [very thick] (v_3)
    edge [very thick] (v_4)
    (v_4) edge [very thick] (v_5);
\end{tikzpicture}
}
\label{fig:community_result_a}
\end{subfigure}
\hspace{1.5em}
\begin{subfigure}{0.2\textwidth}
    
    \scalebox{0.65}{
    \begin{tikzpicture}
    \draw [fill=gray!20] (-2.25,2) rectangle (0.25,2.5) ;
    \node (V1) at ( -1,2.25) [] {$\nu_1(k=1) $};
    \draw [] (-2.25,1.05) rectangle (0.25,2) ;
    \node (V1') at ( -1,1.65) [] {$(u_1,v_1),(u_2,v_1),$};
    \node (V1') at ( -1,1.25) [] {$(u_1,v_2)$};
    
    \draw [fill=gray!20] (-2.25,0.5) rectangle (0.25,1) ;
    \node (V2) at ( -1,0.75) [] {$\nu_2(k=2) $};
    \draw [] (-2.25,-.45) rectangle (0.25,0.5) ;
    \node (V2') at ( -1,0.25) [] {$(u_2,v_2),(u_2,v_3),$};
    \node (V2') at ( -1,-0.25) [] {$(u_4,v_6)$};
    
    \draw [fill=gray!20] (-2.25,-1) rectangle (0.25,-.5) ;
    \node (V3) at ( -1,-0.75) [] {$\nu_4(k=3) $};
    \draw [] (-2.25,-1.5) rectangle (0.25,-1) ;
    \node (V3') at ( -1,-1.25) [] {$(u_5,v_5),(u_6,v_5)$};
    
    \draw [fill=gray!20] (0.5,2) rectangle (4,2.5) ;
    \node (V4) at ( 2,2.25) [] {$\nu_3(k=3) $};
    \draw [] (0.5,0.55) rectangle (4,2) ;
    \node (V4') at ( 2.25,1.75) [] {$(u_3,v_2),(u_4,v_2),(u_3,v_3),$};
    \node (V4') at ( 2.25,1.25) [] {$(u_4,v_3),(u_3,v_4),(u_4,v_4),$};
    \node (V4') at ( 2.25,.75) [] {$(u_3,v_5),(u_4,v_5)$};
    
    \draw [fill=gray!20] (0.5,0) rectangle (4,.5) ;
    \node (V5) at ( 2,.25) [] {$\nu_5(k=4) $};
    \draw [] (0.5,-1.5) rectangle (4,0) ;
    \node (V5') at ( 2.25,-0.25) [] {$(u_5,v_6),(u_6,v_6),(u_7,v_6),$};
    \node (V5') at ( 2.25,-0.75) [] {$(u_5,v_7),(u_6,v_7),(u_7,v_7),$};
    \node (V5') at ( 2.25,-1.25) [] {$(u_5,v_8),(u_6,v_8),(u_7,v_8)$};

\end{tikzpicture}%

}
\label{subfig:tables}

\end{subfigure}
\vspace{-10pt}
\caption{Compact EquiWing index, EquiWing-Comp}
\label{fig:EquiWing-Comp_index}
\vspace{-10pt}
\end{figure}
In this section, we propose the compressed version of the \textit{EquiWing} index scheme, \textit{EquiWing-Comp}. We incorporate the \textit{$k$-butterfly loose connectivity} for combining the super nodes and exploit the hierarchy structure for finding the seeds quickly to expedite the query processing without any space overhead. We now present the index construction and query processing for \textit{EquiWing-Comp} with their respective complexities.
\\
\noindent\textbf{Index construction.}
We integrate the \textit{EquiWing} construction Algorithm \ref{Algorithm:EquiWIng}, with a hierarchy property by changing $\raisebox{1.5pt}{$\chi$}$ to  $\raisebox{1.5pt}{$\chi$}$ $[k]$ (line 13). The number of super nodes and edges in the resulting \textit{EquiWing} remain unaffected from this change. Once \textit{EquiWing} is constructed, we exploit Theorem \ref{theorem:super_nodes} to combine the super nodes with the same $\psi$ value when necessary in \textit{EquiWing}. 
We iterate across all the levels of \textit{EquiWing} from $k_{max}$ to $1$, to combine super nodes with the same $\psi$ value when necessary. Combining of the two super nodes $\nu$ and $\mu $ is performed on the created \textit{EquiWing} and is achieved by: first, move all the edges contained within the super node $\mu$ to $\nu$; second, we delete the super node $\mu$ and assign its edges to $\nu$.
\\
\noindent \textbf{Time complexity.} The cost for creating \textit{EquiWing-Comp} is divided into two parts: (i) creating \textit{EquiWng} that is $\mathcal{O}(d^{2}_{max}|E(G)|)$. (ii) combining the super nodes which performs BFS over \textit{EquiWing}, and takes negligible time. Therefore, the total construction time for \textit{EquiWing-Comp} is $\mathcal{O}(d^{2}_{max}|E(G)|)$. The construction of \textit{EquiWing-Comp} requires an extra space for 
combining super nodes, which is not the dominating part. Hence the effective space complexity of \textit{EquiWing-Comp} construction is $\mathcal{O}(|E(G)|)$.
\begin{example}
The EquiWing-Comp for the bipartite graph $G$ is shown in Fig. \ref{fig:EquiWing-Comp_index}. We observe that all the edges from \textit{EquiWing} are contained in \textit{EquiWing-Comp}. In Fig. \ref{fig:EquiWing}, $\nu_2 \xLeftrightarrow{\geq k} \nu_3$, hence, super node $\nu_3$ is combined to $\nu_2$. In Fig. \ref{fig:EquiWing-Comp_index}, $\nu_2$ contains all the edges from $\nu_2$ and $\nu_3$ of EquiWing, which can be seen in the tabulated form in Fig. \ref{fig:EquiWing-Comp_index}. 
\end{example}
       

\noindent \textbf{Query processing.} 
For the given query vertex $q$, the \textit{EquiWing-Comp} index created can be used to process the personalized $k$-wing search query using $H(q)$. The index \textit{EquiWing} ($EW$) can be replaced by \textit{EquiWing-Comp} ($EW\raisebox{0.5pt}{-}{C}$ ) in Algorithm \ref{Algorithm:Community_EquiWing} to perform the $k$-wing search represented in Figure \ref{fig:EquiWing_Comp}. Moreover, the hierarchical structure of \textit{EquiWing-Comp} also allows an alternative to exploit the input $k$ for quickly finding $H(q)$ without using $H$. However, we do not discuss the alternative for concision.
\\
\noindent\textbf{Complexity}. Since we are only replacing index in Algorithm \ref{Algorithm:Community_EquiWing}, the worst case time complexity remains $\mathcal{O}{|E(H)|}$. However, the time complexity of \textit{EquiWing-Comp} reduces to $\mathcal{O}(|\raisebox{1.5pt}{$\chi$}|)$ from $\mathcal{O}{(|\Upsilon|+|\raisebox{1.5pt}{$\chi$}|)}$ in case of \textit{EquiWing}, since \textit{EquiWing-Comp} is a tree structure. 
Besides, it is crucial to note the number of super nodes in \textit{EquiWing-Comp} is less than \textit{EquiWing} (detailed in Section \ref{subsection:index_constructiion}).
\begin{example}
Fig. \ref{fig:EquiWing_Comp} shows the $k$-wing search for the query vertex $q=v_5$ and $k=3$. We perform the search using Algorithm \ref{Algorithm:Community_EquiWing}. The result obtain from EquiWing-Comp (Fig. \ref{fig:EquiWing_Comp_b}) is same as of EquiWing (Fig. \ref{fig:EquiWing_community_b}).
\end{example}
\begin{figure}[]
\begin{subfigure}{0.2\textwidth}
\centering
\scalebox{0.26}{
\begin{tikzpicture}
\node (v_1') at ( -4,6)  {\Huge $k=1$};
\draw (0,6) ellipse (2.75cm and .75cm);
\node (v_1) at ( 0,6) [ thick, draw, ellipse, minimum width=50pt,
    align=center] {\Huge $\nu_1$};

\node (v_2') at ( -4,4)  {\Huge $k=2$};
\draw (0,4) ellipse (2.75cm and .75cm);
\node (v_2) at ( 0,4) [ thick,draw, ellipse, minimum width=50pt,
    align=center]{\Huge $\nu_2$};

\node (v_5) at ( -4,0)  {\Huge $k=4$};
\draw (0,0) ellipse (2.75cm and .75cm);
\node (v_5) at ( 0,0) [ thick, draw, fill=blue!20,ellipse, minimum width=50pt,
    align=center]{\Huge $\nu_5$};

\node (v_3'') at ( -5.3,2){\Huge $\Rightarrow$};
\node (v_3') at ( -3.9,2)  {\Huge $k=3$};
\draw (0,2) ellipse (2.75cm and .75cm);
\node (v_4) at ( 1,1.85) [ thick, draw,fill=blue!20, ellipse, minimum width=50pt,
    align=center]{\Huge $\nu_4$};
\node (v_3) at ( -1.1,2.1) [ thick, draw,fill=red!20, ellipse, minimum width=50pt,
    align=center]{\Huge $\nu_3$};
\path[]
(v_1) edge [very thick] (v_2)
(v_2) edge [very thick] (v_3)
    edge [very thick] (v_4)
    (v_4) edge [very thick] (v_5);
    
\end{tikzpicture}
}
\caption{$k$-wing search}
\label{EquiWing_Comp_a}

\end{subfigure}
\hspace{1em}
\begin{subfigure}{0.25\textwidth}
\centering
\scalebox{0.34}{
\begin{tikzpicture}
\node (v_5) at ( 0,6) [circle,draw=black!50,fill=gray!20,very thick,inner sep=7pt] {\Huge $v_5$};
\foreach \name/ \x in {v_2/-6,v_3/-4,v_4/-2,v_5/0,v_6/2,v_7/4,v_8/6}
\node (\name) at ( \x,6) [circle,draw=black!50,very thick,inner sep=7pt] {\Huge $\name$};
\foreach \name/ \x in {u_3/-4.5,u_4/-2,u_5/.5,u_6/3,u_7/5.5}
\node (\name) at (\x,2) [circle,draw=black!50,very thick,inner sep=7pt] { \Huge $\name$};
\path[]
(v_2)
edge  [very thick,red](u_3)
edge  [very thick,red](u_4)
(v_3)
edge  [very thick,red](u_3)
edge  [very thick,red](u_4)
(v_4)
edge  [very thick,red](u_3)
edge  [very thick,red](u_4)
(v_5)
edge  [very thick,red](u_3)
edge  [very thick,red](u_4)
edge  [very thick,blue](u_6)
edge  [very thick,blue](u_5)
(v_6) edge  [very thick,blue](u_7)
edge  [very thick,blue](u_5)
edge  [very thick,blue](u_6)
(v_7) edge  [very thick,blue](u_7)
edge  [very thick,blue](u_5)
edge  [very thick,blue](u_6)
(v_8) edge  [very thick,blue](u_7)
edge  [very thick,blue](u_5)
edge  [very thick,blue](u_6);

\end{tikzpicture}%
}
\caption{$3$-wings for vertex $v_5$}
\label{fig:EquiWing_Comp_b}
\end{subfigure}
\caption{The two $3$-wings for the query vertex $v_5$}
\label{fig:EquiWing_Comp}
\vspace{-10pt}
\end{figure}
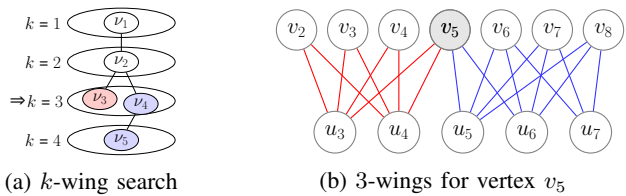
\section{Dynamic Maintenance of Indices}
\label{section:dynamic}

In this section, we examine the $k$-wing cohesive subgraph search in a dynamic graph using the proposed indices. The dynamic change of a graph refers to the insertion/deletion of edges and vertices. We primarily focus on the edge insertion/deletion since vertex insertion/deletion can be regarded as a set of edge insertions/deletions incident to the vertex that is inserted/deleted.


Let us consider the insertion of an edge $e'(u',v')$ in $G$. This could result in forming new butterflies. 
Due to a newly formed butterfly $\hour_{u'v'uv}$, the butterfly support for $(u,v)$, $(u,v')$ and $(u',v)$ in $\hour_{u'v'uv}$ increases. 
The increase in butterfly support may lead to the increase in the wing number for those edges. 
Moreover, the affected edges may not necessarily limit to the edges that are incident on $u'$ and $v'$. 
The edge deletion has a similar effect. 
Calculating the wing number from scratch is costly. 
Therefore, to handle the wing number update efficiently, the key is to identify the scope of the affected edges in the graph precisely. 
Moreover, due to the nature of the bipartite graph, when forming/breaking a butterfly $\hour_{u'v'uv}$ because of inserting/deleting an edge $(u',v')$, the butterfly support for edge $(u,v)$ in $\hour_{u'v'uv}$ increases/decreases by $1$ while the butterfly support for $(u,v')$, $(u',v)$ in $\hour_{u'v'uv}$ could increase/decrease more than $1$, which brings new challenges and makes the well studied theoretical results for core and truss maintenance inapplicable to wing maintenance. We conduct a novel study on the wing maintenance problem.

The increase in wing number due to the insertion of an edge requires to update the indices 
\textit{EquiWing} and \textit{EquiWing-Comp} 
accordingly. 
The edge deletion has a similar effect on decreasing wing number. 
Due to the limited space, we focus on discussing the edge insertion. 
We first identify the affected edges in $G$ (Section \ref{subsection:scope_of_affected_edges}), which in-turn is used to recognize the affected super nodes in our proposed indices (Section \ref{subsection:scope_of_supernodes}) and then design a dynamic update algorithm (Section \ref{subsection:dynamic_maintenance_algorithm}). The dynamic maintenance of 
\textit{EquiWing-Comp} 
is achieved using a similar approach, which is discussed at the end of the section. 

\subsection{Identification of Affected Edges} 
\label{subsection:scope_of_affected_edges}
Let $\psi(e)$ and $\psi'(e)$ be the wing number of $e \in E$ before and after inserting/deleting an edge $e'$. 
Since updating $G$ from scratch is not a viable option, we confine the scope of the affected edges in $G$. We now present the following two evident observations regarding the edge insertion/deletion:
\begin{enumerate}
    \item \textbf{Observation 1}: if $e'$ is inserted into G with $\psi'(e')=a$, then $\forall e \in E$ having $\psi(e) \geq a$, $\psi'(e) = \psi(e)$ 
    \item \textbf{Observation 2}: if $e'$ is deleted from G with $\psi(e')=a$, then $\forall e \in E \setminus \{e'\}$ having $\psi(e) > a$, $\psi(e)$ remains unaffected.
\end{enumerate}
The Observations $1$ and $2$ can be justified clearly as in both the cases $e'$ does not participate in \textit{$(a+1)$-wing}. 

According to Observations $1$ and $2$, if $\psi'(e')$ is known, the scope of the affected edges can be identified clearly. 
However, computing the precise $\psi'(e')$ is expensive, which may explore the entire graph and the wing number of other affected edges can be updated when computing the precise $\psi'(e')$. 
This contradicts to the intention, applying updates on the affected edges only, of using Observations $1$ and $2$. 

To speed up the update computation, instead of computing the precise $\psi'(e')$, we propose a novel upper bound for $\psi'(e')$, denoted as $\overline{\psi'(e')}$ that can be calculated at a low cost. 
Using $\overline{\psi'(e')}$, we can identify a scope that is slightly larger than the scope identified by $\overline{\psi'(e')}$, and then update the wing number for edges in the $\overline{\psi'(e')}$ identified scope only.       

To derive the upper bound $\overline{\psi'(e')}$, we first introduce the definitions and lemmas below.  

\begin{definition}
{$k$-level butterflies:} For an edge $e(u,v)$ and $k \geq 1$, $k$-level butterflies containing $e$ are defined as $\hour_{e}^{k}=\{\hour_{uvu'v'}:\min\{\psi(e'')|e''\in \{u,u'\}\times\{v,v'\}\setminus\{(u,v)\} \} \geq k \}$. The number of butterflies in $\hour_{e}^{k}$ is denoted as $|\hour_{e}^{k}|$.
\label{def:k-level_butterflies}
\end{definition}

Before showing the lemmas below, we introduce a new notation $\delta(\cdot)$. It denotes the difference of the butterfly support/wing number of an edge after and before the insertion of an edge, i.e. $\delta(\psi(e)) = \psi'(e)-\psi(e)$ and $\delta(\hour(e)) = \hour'(e)-\hour(e)$.

\begin{lemma}\label{le:Bdelta}
    After the insertion of $e'$, $\forall e\in E(G)\setminus\{e'\}$, $\psi'(e)-\psi(e) \leq \Delta$, where $\Delta=\max \{\delta (\hour(e)): e \in I(e')\}$, where $I(e')$ denotes the set of edges incident to $e'$. 
\end{lemma}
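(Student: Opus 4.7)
The plan is to argue by contradiction. Suppose there exists $e^* \in E(G) \setminus \{e'\}$ with $\psi'(e^*) - \psi(e^*) > \Delta$; set $k := \psi(e^*)$ and $k' := \psi'(e^*)$, so $k' \geq k + \Delta + 1$. The aim is to exhibit a butterfly-connected subgraph of $G$ containing $e^*$ in which every edge has local butterfly support at least $k + 1$; by the maximality clause of Definition~\ref{def:wing} this forces $\psi(e^*) \geq k + 1$, contradicting $\psi(e^*) = k$.

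Start from a $k'$-wing $H'$ of $G'$ containing $e^*$ and let $H := H' \setminus \{e'\}$, which is a subgraph of $G$ still containing $e^*$ since $e^* \neq e'$. The quantitative core of the argument is the pointwise inequality
\[
\hour(f, H) \;\geq\; \hour(f, H') - \delta(\hour(f)) \qquad (\forall\, f \in H).
\]
This holds because $\hour(f, H') - \hour(f, H)$ is exactly the number of butterflies of $H'$ that simultaneously contain $f$ and $e'$; each such butterfly uses the freshly inserted edge $e'$ and is therefore one of the $\delta(\hour(f))$ newly created butterflies through $f$ in the global graph.

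The second ingredient is the uniform estimate $\delta(\hour(f)) \leq \Delta$ for every $f \neq e'$. For $f \in I(e')$ this is immediate from the definition of $\Delta$. For $f \notin I(e')$ I would observe that $e'$ and $f$ are vertex-disjoint, so any butterfly containing both uses them as the two opposite edges; the remaining two edges of that butterfly are then uniquely determined, so at most one new butterfly through $f$ can be formed, whence $\delta(\hour(f)) \leq 1$. Moreover, whenever any new butterfly appears at all, both of its ``connecting'' edges lie in $I(e')$ and see their support strictly increase, so $\Delta \geq 1$ (the lemma is trivial when no new butterfly is created, since then no wing number changes). Combining these pieces gives $\hour(f, H) \geq k' - \Delta \geq k + 1$ for every $f \in H$.

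To close the argument I extract the butterfly-connected component $C$ of $H$ that contains $e^*$. Butterflies never straddle distinct butterfly-connected components, so $\hour(f, C) = \hour(f, H) \geq k + 1$ for every $f \in C$; thus $C$ satisfies the first two conditions of Definition~\ref{def:wing} at level $k + 1$, and is therefore contained in some $(k + 1)$-wing of $G$. This yields $\psi(e^*) \geq k + 1$, the desired contradiction. I expect the main obstacle to be the non-incident case of the bound $\delta(\hour(f)) \leq \Delta$: for those edges the inequality is not supplied by the definition of $\Delta$ but must be obtained via the geometric observation that the two connecting edges of any common butterfly are forced, together with the fact that $\Delta \geq 1$ once any butterfly exists at all.
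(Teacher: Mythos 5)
Your proposal is correct, and its combinatorial core is the same as the paper's: both arguments rest on bounding $\delta(\hour(f))$ for every $f\neq e'$ by splitting into the incident case ($f\in I(e')$, covered by the definition of $\Delta$) and the non-incident case (where the butterfly is forced to use $e'$ and $f$ as opposite edges, so at most one new butterfly arises and $\delta(\hour(f))\le 1\le\Delta$ whenever any butterfly forms at all). Where you genuinely go beyond the paper is the first half of your argument: the paper simply asserts that ``an upper bound for $\delta(\hour(e))$ can be used to bound $\delta(\psi(e))$'' and never justifies that implication, whereas you prove it by taking a $\psi'(e^*)$-wing $H'$ in the updated graph, deleting $e'$, establishing the pointwise inequality $\hour(f,H)\ge\hour(f,H')-\delta(\hour(f))$ (since the lost butterflies through $f$ are exactly those of $H'$ containing both $f$ and $e'$, all of which are newly created), and then extracting the butterfly-connected component of $e^*$ to exhibit a witness for $\psi(e^*)\ge k+1$ in the original graph. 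This contradiction argument is exactly the missing step, and your handling of the degenerate case (no new butterfly, where the statement is vacuous and the definition of $\Delta$ would otherwise not guarantee $\Delta\ge 1$) also quietly repairs the mismatch between the $\Delta$ of the lemma statement and the $\max\{\cdot,\{1\}\}$ the paper slips into at the end of its own proof. In short: same bound on the support increase, but a complete derivation of the wing-number consequence that the paper leaves implicit.
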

\begin{proof}
An upper bound for $\delta(\hour(e))$ can be used to bound $\delta(\psi(e))$. 
The increase in number of butterfly support ($\delta(\hour(e))$) for an edge can be categorised into two: (i) \textit{non-incident edges w.r.t. $e'$} - the edge $e(u,v)$, does not share any vertex with $e'$, hence it can only form $1$ new butterfly, which means $\delta(\hour(e))=1$; (ii) \textit{incident edges w.r.t. $e'$}- for an edge $e$ sharing a vertex in $e'=(u',v')$, the fourth vertex for the butterfly is not fixed, and is bounded by the common neighbors of the two non-incident vertices of $e$ and $e'$. 
Hence, the maximum value of $\delta(\hour(e))$ can be given as $|\Gamma(u')\cap \Gamma(u)|-1$ if the two edges incident on $v'$ or $|\Gamma(v')\cap \Gamma(v)|-1$ if the two edges incident on $u'$.
Therefore $\Delta=\max \{\{\delta (\hour(e))|e \in I(e')\},\{1\}\}$.  
\end{proof}

Using Definition~\ref{def:k-level_butterflies} and Lemma~\ref{le:Bdelta}, we can have an upper bond for $\psi'(e')$ that is let $l=\max \{k: |\hour_{e'}^{k}| \geq k \}$, then $l+\Delta$ is the upper bound. 
This upper bound is loose since it assumes there are $l$ number of $l$-level butterflies before the insertion of $e'$ and assumes that the insertion makes the wing number of every edge in the $l$-level butterflies increase by $\Delta$, which is quite unlikely. 
A tighter upper bound for $\psi'(e')$ that we use is based on the lemma below.   

\begin{lemma}
If an edge $e'(u',v')$ is inserted into $G$, then $\psi'(e')\leq h$ holds, where $h=\max \{k: |\hour_{e'}^{k-\Delta}| \geq k \}$.
\label{lem:bounds}
\end{lemma}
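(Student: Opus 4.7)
The plan is to argue by contraposition in spirit: assume that after insertion $\psi'(e')=k^{*}$, then show that this forces $|\hour_{e'}^{k^{*}-\Delta}|\ge k^{*}$, which by the definition of $h$ immediately yields $k^{*}\le h$. So the whole task is to translate the post-insertion wing condition on $e'$ into a counting statement involving pre-insertion wing numbers of other edges.

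First I would unfold Definition~\ref{def:wing} at $e'$: if $\psi'(e')=k^{*}$, then $e'$ lies in a $k^{*}$-wing $H'$ of the updated graph, and in particular $e'$ must participate in at least $k^{*}$ butterflies inside $H'$. By the minimum-butterfly constraint of the $k^{*}$-wing, every edge of every such butterfly has \emph{post-insertion} wing number at least $k^{*}$. Call these butterflies $\hour_{1},\dots,\hour_{k^{*}}$ (all distinct, all containing $e'$).

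Next I would use Lemma~\ref{le:Bdelta} to push these post-insertion bounds back to pre-insertion bounds. For each $e''\neq e'$ appearing in some $\hour_i$, the edge $e''$ already existed before the insertion (only $e'$ is new), and Lemma~\ref{le:Bdelta} gives $\psi'(e'')-\psi(e'')\le \Delta$, hence $\psi(e'')\ge \psi'(e'')-\Delta\ge k^{*}-\Delta$. Applying this to the three non-$e'$ edges of each butterfly $\hour_i$, the minimum pre-insertion wing number among them is at least $k^{*}-\Delta$. Reading Definition~\ref{def:k-level_butterflies} with $\psi$ interpreted as the pre-insertion wing numbers of the three existing edges around $e'$, this is exactly the membership condition for $\hour_{e'}^{k^{*}-\Delta}$. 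Therefore $\{\hour_{1},\dots,\hour_{k^{*}}\}\subseteq \hour_{e'}^{k^{*}-\Delta}$, so $|\hour_{e'}^{k^{*}-\Delta}|\ge k^{*}$. By the definition $h=\max\{k:|\hour_{e'}^{k-\Delta}|\ge k\}$, this forces $k^{*}\le h$, completing the argument.

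The main obstacle is the bookkeeping around what ``$k$-level butterflies containing $e'$'' means when $e'$ is brand new: the butterflies themselves are objects of the \emph{post}-insertion graph (they contain the new edge), yet the wing-number threshold in the definition has to be read on the \emph{pre}-insertion values of the other three edges, because $h$ is meant to be computed from data available before the update. A secondary subtlety is that Lemma~\ref{le:Bdelta} must cover both edges incident to $e'$ (where $\delta(\hour(\cdot))$ can be large) and edges non-incident to $e'$ (where $\delta(\hour(\cdot))\le 1$); since $\Delta$ is defined to dominate all these cases, the single bound $\psi(e'')\ge k^{*}-\Delta$ applies uniformly and no case split is needed beyond invoking Lemma~\ref{le:Bdelta}.
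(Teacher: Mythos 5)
Your proposal is correct and follows essentially the same route as the paper's proof: the paper argues by contradiction (assume $\psi'(e')=a>h$, extract $a$ butterflies from the post-insertion $a$-wing, push the wing numbers back by $\Delta$ to conclude $|\hour_{e'}^{a-\Delta}|\geq a$ and hence $h\geq a$), which is exactly your argument in contrapositive form. Your explicit invocation of Lemma~\ref{le:Bdelta} and the remark about reading the threshold in Definition~\ref{def:k-level_butterflies} on pre-insertion wing numbers make the same steps slightly more careful than the paper's terse version, but the underlying idea is identical.
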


\begin{proof}
We prove $\psi'(e')\leq h$ by contradiction. Let $\psi'(e')=a > h$, then there exists a $a$-wing cohesive subgraph where all the edges have wing number no less than $a$, and $e'$ would be contained by at least $a$ butterflies. Since the wing number of all the edges except $e'$ can increase at most by $\Delta$ after insertion, we have $|\hour_{e'}^{a-\Delta}| \geq a$ before insertion. This also implies that $h \geq a$ as per the definition of $h$, which is a contradiction. 
\end{proof}


With the aid of Lemma \ref{lem:bounds}, the upper bound $\overline{\psi'(e')}$ is given below.

\begin{cor}\label{cor:upperbound}
$\overline{\psi'(e')}=\max \{k:|\hour_{e'}^{k-\Delta}| \geq k\}$ is an upper bound of $\psi'(e')$.
\end{cor}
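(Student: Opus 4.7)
The plan is to treat this corollary as essentially a repackaging of Lemma \ref{lem:bounds}. The quantity $\overline{\psi'(e')} = \max\{k : |\hour_{e'}^{k-\Delta}| \geq k\}$ is definitionally identical to the $h$ appearing in the statement of Lemma \ref{lem:bounds}, so the whole argument reduces to a direct substitution. First I would note the equality of the two expressions; then I would cite Lemma \ref{lem:bounds} to conclude $\psi'(e') \leq h = \overline{\psi'(e')}$.

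To make the deduction fully self-contained, I would briefly recall the core chain of reasoning from the lemma's proof, so that the corollary can stand on its own in later sections. Namely, by Lemma \ref{le:Bdelta}, every non-inserted edge's wing number can grow by at most $\Delta$ after inserting $e'$. Consequently, any butterfly contributing to a post-insertion wing number of $a$ for $e'$ must already be in $\hour_{e'}^{a-\Delta}$ before the insertion, and there must be at least $a$ of them. This gives the implication $\psi'(e') \geq a \Rightarrow |\hour_{e'}^{a-\Delta}| \geq a$, and taking the largest such $a$ yields $\psi'(e') \leq \max\{k : |\hour_{e'}^{k-\Delta}| \geq k\} = \overline{\psi'(e')}$.

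I do not anticipate any genuine obstacle here, since Lemma \ref{lem:bounds} has already carried out the contradiction argument and Lemma \ref{le:Bdelta} has already justified the $\Delta$-shift. The only thing worth being careful about is ensuring that $\Delta$ in the corollary is interpreted consistently with Lemma \ref{le:Bdelta} (maximum butterfly-support gain over edges incident to $e'$, with the convention $\Delta \geq 1$), and that the $\max$ is well-defined (the set $\{k : |\hour_{e'}^{k-\Delta}| \geq k\}$ is finite and nonempty, because for small $k$ with $k \leq \Delta$ we have $\hour_{e'}^{k-\Delta} = \hour_{e'}^{0}$ which contains every butterfly on $e'$ after insertion). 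With that observation in place, the proof of the corollary is a one-line invocation of the lemma.
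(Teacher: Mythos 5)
Your proposal is correct and matches the paper exactly: the paper offers no separate proof for this corollary, treating it as an immediate restatement of Lemma~\ref{lem:bounds} (whose $h$ is definitionally the same expression as $\overline{\psi'(e')}$), which is precisely the one-line substitution you describe. Your added remarks on the well-definedness of the $\max$ and the consistent interpretation of $\Delta$ are harmless extras beyond what the paper records.
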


Using $\overline{\psi'(e')}$, the relaxed scope of the affected edges can be derived, i.e., $\forall e$ such that $\psi(e)$ $\le$ $\overline{\psi'(e')}$.

Next we propose the theoretical findings that can help us further exclude the edges in the relaxed scope that do not need to perform update computations. The intuition is given below. According to Lemma~\ref{le:Bdelta}, although the insertion of $e'$ may increase the wing number of some edges by $\Delta$, it may also increase the wing number of some edges at most by $1$ as well. This motivates us to study the identification of edges such that they even cannot increase the wing number by $1$ after inserting $e'$ and we exclude these edges for the update computations. We produce the lemma below. 

\begin{lemma}
\label{lemma:3}
If an edge $e'(u',v')$ is inserted into $G$, we first calculate the value of $\overline{\psi'(e')}$. We then assign $\psi(e')=\max \{k: |\hour_{e'}^{k}| \geq k \}$, and for all the edges $e_0(u_{0},v_{0}) \in E \cup \{e'\}$ with $\psi(e_0)=a \leq \overline{\psi'(e')}$, the wing number of $e_0$ may be updated as $\psi'(e_0)\ge a+1$, if and only if:
\begin{enumerate}
    \item A new butterfly with edges of $\{u',v'\}$ $\times$ $\{u_{0},v_{0}\}$ is formed, and $\min\{\psi((u',v_{0})),\psi((v',u_{0})), \psi((u',v')) \} > a-\Delta$; or
    \item For any $e_0(u_{0},v_{0})$, $\exists  \hour_{u_{0}v_{0}uv}$ such that $ \min$ $\{$$\psi((v_{0},u)),$ $\psi((u,v)),$ $\psi((v,u_{0}))$$\}$ $>$ $a-\Delta$ holds. 
\end{enumerate}

\label{lem:k connected}
\end{lemma}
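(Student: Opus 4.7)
The plan is to establish both directions of the biconditional separately, with Lemma~\ref{le:Bdelta} serving as the bridge that converts conditions on post-insertion wing numbers into conditions on pre-insertion wing numbers. The central observation is that, by Definition~\ref{def:wing}, $\psi'(e_0)\ge a+1$ holds only if $e_0$ lies in at least one butterfly $\hour$ of the updated graph whose every edge has post-insertion wing number at least $a+1$. Because Lemma~\ref{le:Bdelta} caps any single edge's gain at $\Delta$, each such edge must have had pre-insertion wing number at least $a+1-\Delta$, i.e.\ strictly greater than $a-\Delta$; this is exactly the threshold appearing in both conditions, so the remaining work is only to enumerate the possible shapes of $\hour$.

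For the forward direction ($\Rightarrow$), I would assume $\psi'(e_0)\ge a+1$, fix any witness butterfly $\hour$ containing $e_0$, and split on whether $\hour$ contains the inserted edge $e'$. If it does, the opposite-pair configuration forces $e'$ and $e_0$ to be the two non-adjacent edges of $\hour$, so the vertex set of $\hour$ is $\{u',v',u_0,v_0\}$ and the three non-$e_0$ edges are $(u',v_0)$, $(v',u_0)$, and $e'$ itself. Each has pre-insertion wing number (using the assigned value for $e'$) strictly greater than $a-\Delta$ by Lemma~\ref{le:Bdelta}, which is condition~(1). Otherwise $\hour$ existed in $G$ before the insertion with some vertex set $\{u_0,v_0,u,v\}$, and its three non-$e_0$ edges $(v_0,u)$, $(u,v)$, $(v,u_0)$ again had pre-insertion wing numbers strictly greater than $a-\Delta$, which is condition~(2).

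For the backward direction ($\Leftarrow$), I would note that either condition produces a concrete butterfly $\hour$ containing $e_0$ in which every companion edge has pre-insertion wing number at least $a-\Delta+1$. By Lemma~\ref{le:Bdelta}, each such companion edge can gain up to $\Delta$ through the insertion, so it is possible for all of them to reach post-insertion wing number at least $a+1$ together with $e_0$; hence the event $\psi'(e_0)\ge a+1$ is consistent with the update and is not excluded by any bound we have derived. This is precisely the force of the ``may'' in the statement: the conditions mark exactly the edges that cannot be pruned from the update computation, while an edge failing both conditions is provably left at $\psi'(e_0)\le a$.

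The main obstacle I expect is pinning down the meaning of ``may'' so that the backward direction is a non-vacuous claim, and correctly handling the corner case $e_0=e'$, where the ``pre-insertion'' wing number of $e'$ is not an honest pre-insertion quantity but the assigned value $\max\{k:|\hour_{e'}^{k}|\ge k\}$. A related subtlety is that Lemma~\ref{le:Bdelta} bounds the gain only for edges other than $e'$; for $e'$ itself I would invoke Corollary~\ref{cor:upperbound} to obtain the analogous bound $\overline{\psi'(e')}\le \psi(e')+\Delta$, and then verify that the case split above remains symmetric between $e'$ and $e_0$ inside any new butterfly created through the insertion.
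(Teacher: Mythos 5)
Your proof follows essentially the same route as the paper's: both directions hinge on the $\Delta$-cap from Lemma~\ref{le:Bdelta} to translate the post-insertion requirement $\psi'\ge a+1$ on the companion edges of a witness butterfly into the pre-insertion threshold $>a-\Delta$, with the case split on whether that butterfly contains $e'$ yielding conditions (1) and (2), and with ``may'' read as ``cannot be excluded'' for the converse. The paper phrases the necessity direction contrapositively (failing both conditions leaves no sufficient supply of $(a+1)$-level butterflies) and, like you, treats the possibility direction informally; your explicit handling of the $e_0=e'$ corner via Corollary~\ref{cor:upperbound} is a point of care the paper's proof glosses over.
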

\begin{proof}
Let there be an edge $e_0(u_0,v_0)$ with $\psi(e_0)=a$ and $\psi'(e_0)$ $=$ $a+ 1$. 
The easiest case to satisfy is below. Before the insertion, $e_{0}$ only needs to involve $(a+1)$ number of $(a-\Delta+1)$-level butterflies and after the insertion, the butterfly support of all the edges in these 
$(a-\Delta+1)$-level butterfly may increase by $\Delta$, which makes them become $(a+1)$-level butterflies and 
increases the wing number for edge $e_0$ from $a$ to $a+1$. 
Then we prove that only the edges satisfying cases (1) and (2) can increase their wing number.

For case (1), due to the insertion of $e'(u',v')$, the butterfly support for $(u',v_{0})$, $(v',u_{0})$, and $(u',v')$ could increase by $\Delta$. As such, after the insertion $\min$$\{$$\psi((u',v_{0})),$ $\psi((v',u_{0})),$$\psi((u',v'))$$\}$ can be increased to a value no less than $a+1$ and it is also possible that there are at least $(a+1)$ number of $(a+1)$-level butterflies, which makes 
the value of $\psi'(e_0)$ at least $a+1$. 

For case (2), since $\min$$\{$$\psi((u',v_{0})),$ $\psi((v',u_{0})),$$\psi((u',v'))$$\}$ could be increased by $\Delta$ after the insertion, $\min$ $\{$ $\psi((u',v_{0})),$$\psi((v',u_{0})),$ $\psi((u',v'))$$\}$ can be increased to a value in $[a+1,\overline{\psi'(e')})$ and it is also possible that there would be at least $i$ number of $i$-level butterflies ($a+1\le i< \overline{\psi'(e')}$), which makes 
the value of $\psi'(e_0)$ at least $a+1$.

Next consider an edge $e_{0}(u_{0},v_{0}) \in E \cup \{e'\}$ with $\psi(e_{0})=a$, if it does not satisfy case (1), there is no new formed butterfly containing $e_{0}$ to account for the increase in the wing number. Hence, it is impossible to have $\psi'(e_{0})\geq a$. 

If it does not satisfy case (2), then the existing butterfly $\hour_{u_{0}v_{0}uv}$ is with either $\min$$\{$$\psi'((u_{0},v)),$ $\psi'((v,u)),$$\psi'((u,v_{0}))$$\}$ $<$ $a$ $-$ $\Delta$ $+1$ or $\min$$\{$$\psi'((u_{0},v))$,$\psi'((v,u))$,$\psi'((u,v_{0}))$$\}$  $\geq$ $\overline{\psi'(e')}$. After the insertion we respectively have $\min$$\{$$\psi'((u_{0},v)),$ $\psi'((v,u)),$$\psi'((u,v_{0}))$$\}$ $<$ $a+1$ or $\min$$\{$$\psi'((u_{0},v)),$ $\psi'((v,u)),$ $\psi'((u,v_{0}))$$\}$ $\geq$ $\overline{\psi'(e')}+\Delta$.
For the first situation, after the insertion, there would be no sufficient number of $(a+1)$-level butterflies for $e_{0}$ for increasing its wing number. 
The second situation is out of the scope of this lemma and the wing number would not be affected due to the insertion.  
Thus if $e_{0}$ satisfies neither case (1) nor (2), it is impossible to have $\psi'(e_{0}) \ge a+1$. 
\end{proof}
Based on Lemma \ref{lemma:3} and the proposed upper bound, we refine the affected edge scope due to edge insertion/deletion as follows.

\noindent\textbf{Refined scope of the affected edges.} Using Corollary \ref{cor:upperbound} and Lemma \ref{lem:k connected}, we now summarize the edges whose wing number could be affected by the insertion/deletion of the edge $e'$:
\begin{enumerate}
    \item \textbf{Insertion case.} For $e \in E \cup \{e'\}$ with $\psi(e) < \overline{\psi'(e')}$, if $e$ and $e'$ form a new $k$-butterfly, or $e$ is connected to $e'$ via a series of adjacent $k$-butterflies i.e. $e \xLeftrightarrow{k} e'$, then $e$ may have $\psi'(e) \ge \psi(e)$.
    \item \textbf{Deletion case.} For $e \in E \setminus \{e'\}$ with $\psi(e) \leq \psi(e')$, if $e$ and $e'$ belong to a $k$-butterfly, or $e$ is connected to $e'$ via a series of adjacent $k$-butterflies i.e. $e \xLeftrightarrow{k} e'$ before the deletion, then $e$ may have $\psi'(e) \leq \psi(e)$. 
\end{enumerate}

\begin{figure}[]
\centering
    
\begin{subfigure}{0.3\textwidth}
    
    \scalebox{0.3}{
    \begin{tikzpicture}
\foreach \name/ \x in {v_1/1,v_2/3,v_3/5,v_4/7,v_5/9,v_6/11,v_7/13,v_8/15}
\node (\name) at ( \x,6) [circle,draw=blue!20,fill=blue!20,very thick,inner sep=7pt] {\Huge $\name$};

\foreach \name/ \x in {u_1/2,u_2/4,u_3/6,u_4/8,u_5/10,u_6/12,u_7/14}
\node (\name) at (\x,3) [circle,draw= blue!20,fill=blue!20,very thick,inner sep=7pt] {\Huge $\name$};
\path[]
(v_1) 
edge  [very thick](u_1)
edge  [very thick](u_2)
(v_2) edge  [very thick] (u_1)
edge [very thick,green] (u_2)
edge  [very thick,red] (u_3) 
edge  [very thick,red] (u_4)
(v_3) edge [very thick,red] (u_4)
edge  [very thick,green](u_2)
edge  [very thick,red](u_3)
(v_4) 
edge  [very thick,red](u_4)
edge  [very thick,red](u_3)
edge  [very thick,dashed](u_6)
(v_5) edge  [very thick,red](u_3)
edge  [very thick,red](u_4)
edge  [very thick,red](u_5)
edge  [very thick,red](u_6)
(v_6) edge  [very thick,blue](u_5)
edge  [very thick,blue](u_6)
edge  [very thick,blue](u_7)
edge  [very thick,green](u_4)
(v_7) edge  [very thick,blue](u_5)
edge  [very thick,blue](u_6)
edge  [very thick,blue](u_7)
(v_8) edge  [very thick,blue](u_5)
edge  [very thick,blue](u_6)
edge  [very thick,blue](u_7);
\end{tikzpicture}%
}
\caption{Edge Insertion}
\label{subfig:edge_insert}
\end{subfigure} 
\begin{subfigure}{0.15\textwidth}
\scalebox{0.3}{

\begin{tikzpicture}
\node (1) at (0,3) [circle,draw=black!50,very thick,inner sep=5pt] { \Huge $\nu_1$};
\node (2) at (-1,4)  {\huge $\psi(\nu_1)=1$};

\node (3) at (-1,0)  {\huge $\psi(\nu_2)=2$};
\node (4) at (0,1) [circle,draw=black!50,very thick,inner sep=5pt] {\Huge $\nu_2$};
 
\node (5) at (2.5,3) [circle,draw=black!50,very thick,inner sep=5pt] {\Huge $\nu_3$};
\node (6) at (2.5,4)  {\huge $\psi(\nu_3)=3$};

\node (7) at (2.5,1) [circle,draw=black!50,very thick,inner sep=5pt] { \Huge $\nu_4$};
\node (8) at (2.5,0)  {\huge $\psi(\nu_4)=4$};

\path[]
(1) edge [very thick] (4)
(4)edge [very thick] (5)
(5)edge [very thick] (7);

\end{tikzpicture}%
}
\caption{Updated $EW$ }
\label{subfig:updated_equiwing}
\end{subfigure}

\caption{Inserted new edge $(v_4,u_6)$ and the updated $EW$ }
\label{fig:inserted_edge}
\vspace{-10pt}
\end{figure}
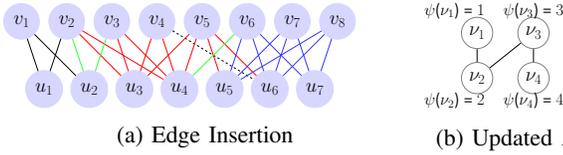

\subsection{Identification of Affected Super Nodes}
\label{subsection:scope_of_supernodes}
Since all the $k$-butterfly equivalent edges in the graph are grouped to form super nodes in the EquiWing index, the scope of the affected super nodes can be derived using the scope of the affected edges. We now propose the following theorem indicating the affected super nodes in \textit{EquiWing}. 

\begin{theorem}
Given an inserted edge $e'(u',v')$ and a $k$-butterfly $\hour_{u'v'uv}$ where $k < \overline{\psi'(e')}$, then the following super nodes in \textit{EquiWing} may be updated: $\{ \nu | \nu \in \raisebox{1.5pt}{$\chi$}, e\in \nu, \psi(e)=k\}$.
\label{theorem:insert}
\end{theorem}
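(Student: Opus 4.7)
The plan is to combine the refined scope of affected edges (derived from Corollary~\ref{cor:upperbound} and Lemma~\ref{lem:k connected}) with the defining property of super nodes in EquiWing. Recall that by Algorithm~\ref{Algorithm:EquiWIng}, a super node $\nu \in \raisebox{1.5pt}{$\chi$}$ is exactly a $k$-wing equivalence class: all edges inside $\nu$ share a common wing number and are mutually $k$-butterfly connected. Hence, characterising which super nodes ``may be updated'' reduces to identifying (a) which edges may have their wing number change, and then (b) lifting those edges to the super nodes they belong to.

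First I would invoke the refined insertion scope: an edge $e \in E \cup \{e'\}$ with $\psi(e) < \overline{\psi'(e')}$ may change its wing number only if either it forms a new butterfly with $e'$, or it is $k$-butterfly connected to $e'$ through a chain of $k$-butterflies. In particular, every potentially affected edge $e$ with $\psi(e) = k$ must lie in some $k$-butterfly $\hour_{u'v'uv}$ containing $e'$, or be reachable from such a butterfly via a series of edges all at wing number $k$. By Observation~1, edges with $\psi(e) \geq \overline{\psi'(e')}$ are not affected, so only the levels $k < \overline{\psi'(e')}$ are relevant.

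Second, I would lift this edge-level scope to EquiWing. If an edge $e$ with $\psi(e) = k$ may be updated, then the super node $\nu$ containing $e$ is a $k$-wing equivalence class at level $k$, so $\nu \in \{ \nu \mid \nu \in \raisebox{1.5pt}{$\chi$},\ e\in\nu,\ \psi(e)=k \}$. Conversely, since the super node $\nu$ is defined by the $k$-butterfly connectivity relation among its edges, any change in wing number or in the connectivity structure of a single edge in $\nu$ necessitates re-examining the entire class, so the whole super node $\nu$ is what ``may be updated''. Taking the union over all $k$-butterflies $\hour_{u'v'uv}$ through $e'$ with $k<\overline{\psi'(e')}$, and over all edges $e$ at level $k$ reachable through $k$-butterfly connectivity from such a butterfly, yields exactly the set in the theorem statement.

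The main obstacle is the two-sided tightness check: on the one hand, ensuring no super node outside the stated set can be affected (handled by Observation~1 together with Lemma~\ref{lem:k connected}, which excludes edges at levels $\geq \overline{\psi'(e')}$ as well as edges that neither form a new butterfly with $e'$ nor sit in a $k$-butterfly chain from $e'$); on the other hand, arguing that it suffices to enumerate the starting $k$-butterflies $\hour_{u'v'uv}$ at $e'$ rather than arbitrary butterflies in $G$, which follows because Lemma~\ref{lem:k connected} anchors every relevant $k$-butterfly chain at $e'$ through either a newly formed butterfly or an existing butterfly already adjacent to $e'$.
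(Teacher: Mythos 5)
Your proposal is correct and follows essentially the same route as the paper: the paper also proves Theorem~\ref{theorem:insert} by invoking the refined scope of affected edges from Corollary~\ref{cor:upperbound} and Lemma~\ref{lem:k connected} and observing that the affected ($k$-butterfly connected) edges coincide with the contents of the level-$k$ super nodes. Your write-up is in fact more explicit than the paper's brief justification, particularly on the two-sided tightness of the lifting from edges to super nodes, but no new idea is introduced.
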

Theorem \ref{theorem:insert} can be easily proved using the scope of affected edges for the insertion. Since, only the \textit{$k$-butterfly connected} edges are affected due to the insertion of $e'$, which is same as the edges in a super nodes. Hence, when a new edge $e'$ is inserted to $G$, all the edges with a potential wing number increase, except $e'$, are contained in the affected super nodes of \textit{EquiWing}. Therefore, we avoid re-examining the original graph $G$ to find the affected edges and hence save significant computational cost.
\begin{example}
We insert a new edge $e'(u_6,v_4)$ in $G$ as presented in Fig. \ref{subfig:edge_insert}, and $\overline{\psi'(u_6,v_4)}=4$. We examine the butterflies $\hour_{u_6v_4u_4v_5},$ $\hour_{u_6v_4u_3v_5},$ and $\hour_{u_6v_4u_4v_6}$ as it includes $e'$. Since, $\psi(u_6,v_6)=4 \nless 4$, it is not updated including the edges $k$-butterfly connected with $(u_6,v_6)$. However, the other edges $(u_3,v_4),(u_3,v_5),(u_4,v_5),(u_6,v_5),(u_4,v_4),$ and $(u_4,v_6)$, have the $\psi(e) < 4$. We recognize that $\nu_3,\nu_4$, and $\nu_5$ (Fig. \ref{fig:EquiWing}) include these edges, so they are considered to be affected. As a result, all the other edges within these super nodes including the inserted edge $e'$, will be re-examined as their wing number may increase, based on Theorem \ref{theorem:insert}. 
\end{example}

\begin{algorithm}[t]
\small
	\SetKwInOut{Input}{Input}
	\SetKwInOut{Output}{Output}
	\Input{The affected edge set $E'$, the affected super node set $\raisebox{1.5pt}{$\chi'$}$, and the EquiWing index $EW$} 
	\Output{The updated EquiWing $EW$} 
	\SetKwFunction{FMain}{Update}
    \SetKwProg{Fn}{Function}{}{}
    \Fn{\FMain{$\raisebox{1.5pt}{$\chi'$}$,$E'$,$EW$}}{
        $EW' \longleftarrow EW$\\
        \ForEach{$\nu \in \raisebox{1.5pt}{$\chi'$}$}{
            $EW' \longleftarrow EW' - \{\nu\}$
        }
        \ForEach{$e(u,v) \in E'$}{
            $U'\longleftarrow U' \cup u$,
            $V'\longleftarrow V' \cup v$
            }
        $E'\longleftarrow$  $Update_K(G'(U',V'));$ \tcc{Update wing number of the affected edges E'.}
        $\delta_{EW}\longleftarrow Index$  $ Construction(G'(U',V'));$ \tcc{Call Algorithm \ref{Algorithm:EquiWIng}.}
        $EW' \longleftarrow EW' \cup \delta_{EW}$\\
        \ForEach{$(\nu,\mu) \in \Upsilon'_{EW}$}{
        \If{$\psi(\nu)=\psi(\mu) $}{
        \tcc{merge the two super nodes $\nu$ and $\mu$}}
        }
   
        \textbf{return} $EW(\raisebox{1.5pt}{$\chi$},\Upsilon)$;
}

\caption{Dynamic maintenance of EquiWing.}
\label{Algorithm:dynamic_update}

\end{algorithm}

Likewise, we propose the theorem for the deletion scenario indicating the affected super nodes of \textit{EquiWing}:
\begin{theorem}
Given an edge $e'(u',v')$ deleted from $G$, and $e' \in \nu$ where $\nu \in \raisebox{1.5pt}{$\chi$}$. Then the following super nodes in \textit{EquiWing} may be updated: $\{\nu\} \cup \{\mu | \psi(\mu) < \psi(e'),\mu \in \raisebox{1.5pt}{$\chi$}, (\nu,\mu) \in \Upsilon\}$.
\label{theorem:delete}
\end{theorem}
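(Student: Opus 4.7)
The plan is to reduce the question to the already-established refined scope of affected edges for the deletion case and show that every super node containing such an affected edge lies in the listed set. The super node $\nu$ is trivially listed because deleting $e'$ removes it from $\nu$ and may lower the wing numbers of the remaining edges of $\nu$. For any other super node $\mu'$ that gets updated, some edge $e \in \mu'$ with $e \ne e'$ must satisfy the refined deletion scope: $\psi(e) \le \psi(e')$ and $e \xLeftrightarrow{\psi(e)} e'$ via a chain of $\psi(e)$-butterflies $\hour_1, \ldots, \hour_n$ with $e \in \hour_1$ and $e' \in \hour_n$, existing just before the deletion.

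I would then split on the comparison of $\psi(e)$ and $\psi(e')$. If $\psi(e) = \psi(e')$, the chain together with equal wing numbers directly yields $e \equi e'$, forcing $\mu' = \nu$, which is already in the listed set. The nontrivial case is $\psi(e) < \psi(e')$, where $\mu' \ne \nu$ and $\psi(\mu') = \psi(e) < \psi(e')$; here I must exhibit a super edge $(\nu, \mu') \in \Upsilon$, which will place $\mu'$ in the listed set.

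The main obstacle is handling chains of length $n \ge 2$, since for $n = 1$ the single butterfly $\hour_1$ already carries both $e \in \mu'$ and $e' \in \nu$, and the super-edge bookkeeping of Algorithm~\ref{Algorithm:EquiWIng} then records $(\nu, \mu') \in \Upsilon$. For $n \ge 2$, I would focus on the terminal butterfly $\hour_n \ni e'$: by the definition of $k$-butterfly connectivity, the intersection $\hour_{n-1} \cap \hour_n$ contains an edge $f$ with $\psi(f) = k = \psi(e)$. The sub-chain $\hour_1, \ldots, \hour_{n-1}$ witnesses $f \xLeftrightarrow{k} e$, and together with $\psi(f) = \psi(e) = k$ this gives $f \equi e$ by Definition~\ref{def:k-wing_equivalence}, so $f \in \mu'$. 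Since $\hour_n$ now carries $f \in \mu'$ and $e' \in \nu$, the super-edge between $\mu'$ and $\nu$ must have been recorded during index construction, giving $(\nu, \mu') \in \Upsilon$ as required and completing the theorem.
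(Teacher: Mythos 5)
Your proposal is correct, and it follows the only route the paper itself gestures at: the paper omits the proof of Theorem~\ref{theorem:delete} entirely, stating only that it "can be similarly accomplished" as Theorem~\ref{theorem:insert}, whose own proof is a two-sentence sketch reducing the claim to the refined scope of affected edges. You take the same reduction but supply the step the paper leaves implicit, namely why every affected equivalence class other than $\nu$ must actually be an $\Upsilon$-neighbor of $\nu$ with strictly smaller wing number. Your case split on $\psi(e)=\psi(e')$ versus $\psi(e)<\psi(e')$ correctly disposes of the equality case via $e \equi e'$ (so the class is $\nu$ itself), and your treatment of the terminal butterfly $\hour_n$ in the connecting chain is the right observation: all junction edges of a chain of $\psi(e)$-butterflies have wing number exactly $\psi(e)$ and hence lie in the same class as $e$, so the last butterfly is shared between that class and $\nu$, which is exactly the condition under which Algorithm~\ref{Algorithm:EquiWIng} records the super edge (the lower-level BFS writes $\mu'$'s identifier into $e'.list$, and the super edge is created when $\nu$ is built). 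This is a more careful argument than the paper provides; what it buys is an actual justification that the affected set is confined to \emph{direct} neighbors of $\nu$ rather than to classes reachable through longer paths in $EW$, a point the paper's sketch never addresses.
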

The proof of Theorem \ref{theorem:delete} can be similarly accomplished as of Theorem \ref{theorem:insert}, hence, we omit the its discussion for concision.


\subsection{Dynamic Maintenance Algorithm}
\label{subsection:dynamic_maintenance_algorithm}

\noindent\textbf{EquiWing update}. 
After an edge insertion, let us denote the set of affected edges in $G$ as $E'$ and the set of affected super nodes in the 
\textit{EquiWing} 
index as $\raisebox{1.5pt}{$\chi$'}$. 
We can easily identify $E'$ and $\raisebox{1.5pt}{$\chi$'}$ according the theoretical findings proposed in Sections \ref{subsection:scope_of_affected_edges} and \ref{subsection:scope_of_supernodes}. 
Using $E'$ and $\raisebox{1.5pt}{$\chi$'}$, the 
\textit{EquiWing} 
index can be updated by Algorithm \ref{Algorithm:dynamic_update} as follows. 

\noindent\textit{Super nodes update}. We build a new 
\textit{EquiWing} 
index $\delta_{EW}$ only for the $E'$ induced subgraph of $G$ using Algorithm \ref{Algorithm:EquiWIng}, lines 7 to 8 in Algorithm \ref{Algorithm:dynamic_update}.
A super node in $\delta_{EW}$ can be merged into an existing unaffected super node in the index, which is done by lines $10$ to $12$.

\noindent\textit{Super edge update}. We identify a subgraph $G'\subseteq G$, which is the subgraph induced by vertices contained in $E'$. 
The purpose for identifying $G'$ is to preserve the butterfly connectivity between the affected edges and the unaffected edges, which is used to update the super edges in the 
\textit{EquiWing} 
index. 
Thanks to $G'$, we can safely remove $\raisebox{1.5pt}{$\chi$'}$ from the index (lines $3$ to $4$) since after the super node update, the newly computed super nodes in $\delta_{EW}$ can be linked to the index easily.

\begin{example}
For an inserted edge $e'=(u_6,v_4)$ in Fig. \ref{subfig:edge_insert}, the affected super nodes are $\nu _3, \nu_4$ and $\nu_5$. Algorithm \ref{Algorithm:dynamic_update} computes $\delta_{EW}$ for the affected edges in the affected super nodes, resulting into two new super nodes $\nu_2$ and $\nu_3$. $\delta_{EW}$ is then appended to the existing index with unaffected super nodes $\nu_1$ and $\nu_4$. Fig. \ref{subfig:updated_equiwing} displays the updated \textit{EquiWing}.
\end{example}
\noindent\textbf{EquiWing-Comp update}. The dynamic update for the \textit{EquiWing-Comp} is achieved by compressing the updated \textit{EquiWing}. The running time required for the compression is negligible, which will be shown later in the experiments.

\section{Experiments}
\label{section:experiments}
In this section, we perform the experimental analysis for our proposed techniques on real-world datasets. \\
\noindent\textbf{Experiment setup.} All the experiments were conducted on Eclipse IDE, deployed on the platform $64$x Intel(R)Core(TM) $i7$-$1065G7$ with CPU frequency $1.50$GHz and $16$ GB RAM, running Windows 10 Home operating system. 
All the algorithms were implemented in Java. 
\\
\noindent\textbf{Algorithms.} We have implemented the following three algorithms for our experiments:
\begin{itemize}
    \item \textbf{BaseLine}: it is the implementation for $k$-wing search with wing number.
    \item \textbf{EquiWing ($EW$)}: it exploits the index \textit{EquiWing} to perform the subgraph search using Algorithm \ref{Algorithm:Community_EquiWing}.
    \item \textbf{EquiWing-Comp} ($EW\raisebox{0.5pt}{-}{C}$): this approach utilizes the index \textit{EquiWing-Comp} that is created by compressing \textit{EquiWing} and performs the search. 
\end{itemize}
Since the bitruss decomposition algorithm used for all the three algorithms is \cite{DBLP:conf/icde/Wang0Q0020}, the running time of all the algorithms is independent of the bitruss decomposition algorithm and it is replaceable if in future a better option is available. For our experiments, we consider that a query vertex could be in either $U$ or $V$ for all the algorithms. 

\noindent\textbf{Datasets.} $4$ real-world datasets, obtained from the KONECT repository \cite{kunegis2013konect}, are used to evaluate our algorithms. Table \ref{table:datasets} represents the characteristics of the datasets, including the number of edges, the numbers of vertices for $U$ and $V$, 
and the maximum wing number ($k_{max}$) for the datasets. 

\vspace{-5pt}
\subsection{Index Construction}
\label{subsection:index_constructiion}
We perform the experiments to evaluate the efficiency of the two indexing schemes: \textit{EquiWing} ($EW$) and \textit{EquiWing-Comp} ($EW\raisebox{0.5pt}{-}{C}$) in terms of index size, number of super nodes in the indices and index construction time in Table \ref{table:index_construction}. 

We observe in Table \ref{table:index_construction} that $EW\raisebox{0.5pt}{-}{C}$ is more compact as compared to $EW$, i.e., there is a large number of super nodes in $EW$ that can be combined, which reduces the number of the super nodes contained in $EW\raisebox{0.5pt}{-}{C}$ to a great extent. 
Further, we use the \textit{Compression Ratio} ($C_R$) (the ratio of the number of super nodes in $EW$ and the number of compressed super nodes in $EW\raisebox{0.5pt}{-}{C}$) to measure the compactness provided by $EW\raisebox{0.5pt}{-}{C}$. The value of $C_R$ increases as the size of the input bipartite graph increases and $C_R$ is upto $118.07$ for \textit{Stackoverflow}. 
The size of $EW\raisebox{0.5pt}{-}{C}$ is also smaller as compared to $EW$ as shown in Table \ref{table:index_construction} upto $8$ times for \textit{YouTube}. 
Later, this compactness of $EW\raisebox{0.5pt}{-}{C}$ leads to the efficient query processing.

\vspace{-5pt}
\subsection{Query Processing}
\label{subsection:query_processing}
\begin{table}[t]
\small
\caption{Dataset characteristics }\label{table:datasets}
\centering
\scalebox{.65}{
\begin{tabular}[]{|m{5.5em}|m{2.3em}|m{2.3em}|m{2.3em}|m{2.2em}|m{3em}|}
\hline
Datasets &  \centering  $ |E|$ & \centering V & \centering U   & $k_{max}$  \\ \hline
Producer & 207K & 188K & 49K   & 219 \\ \hline
Record\_label & 233K & 186K & 168K   & 497 \\ \hline
YouTube & 293K & 124K & 94K   & 1102 \\ \hline
Stackoverflow & 1.3M & 641K & 545K   & 1118 \\ \hline
\end{tabular}

}
\vspace{-10pt}
\end{table}

\begin{figure*}
\hspace{-0.5em}
\begin{subfigure}{0.2\textwidth}
\includegraphics[scale=0.115]{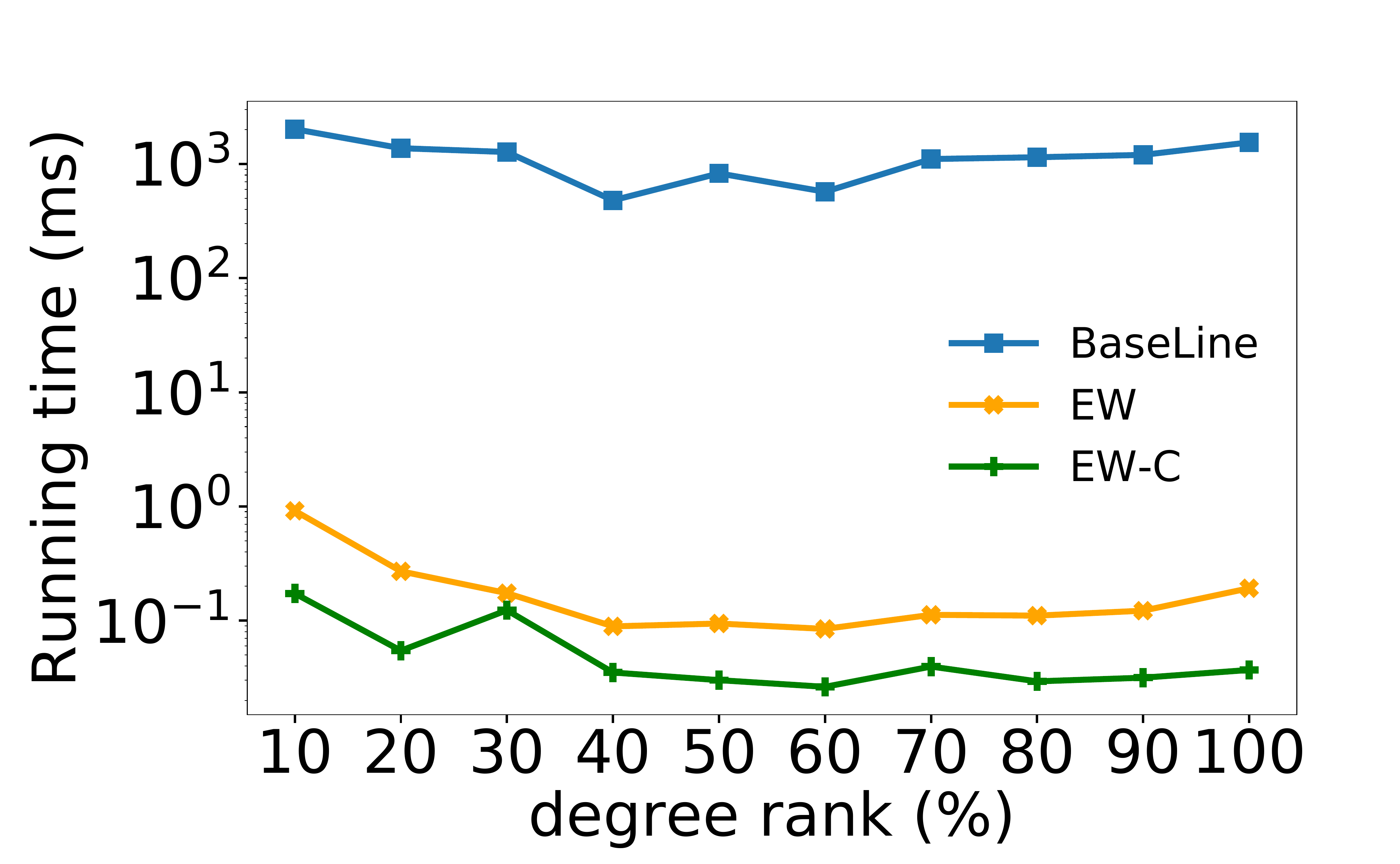}
\subcaption{Producer}
\end{subfigure}
\hspace{2em}
\begin{subfigure}{0.2\textwidth}
\includegraphics[scale=0.115]{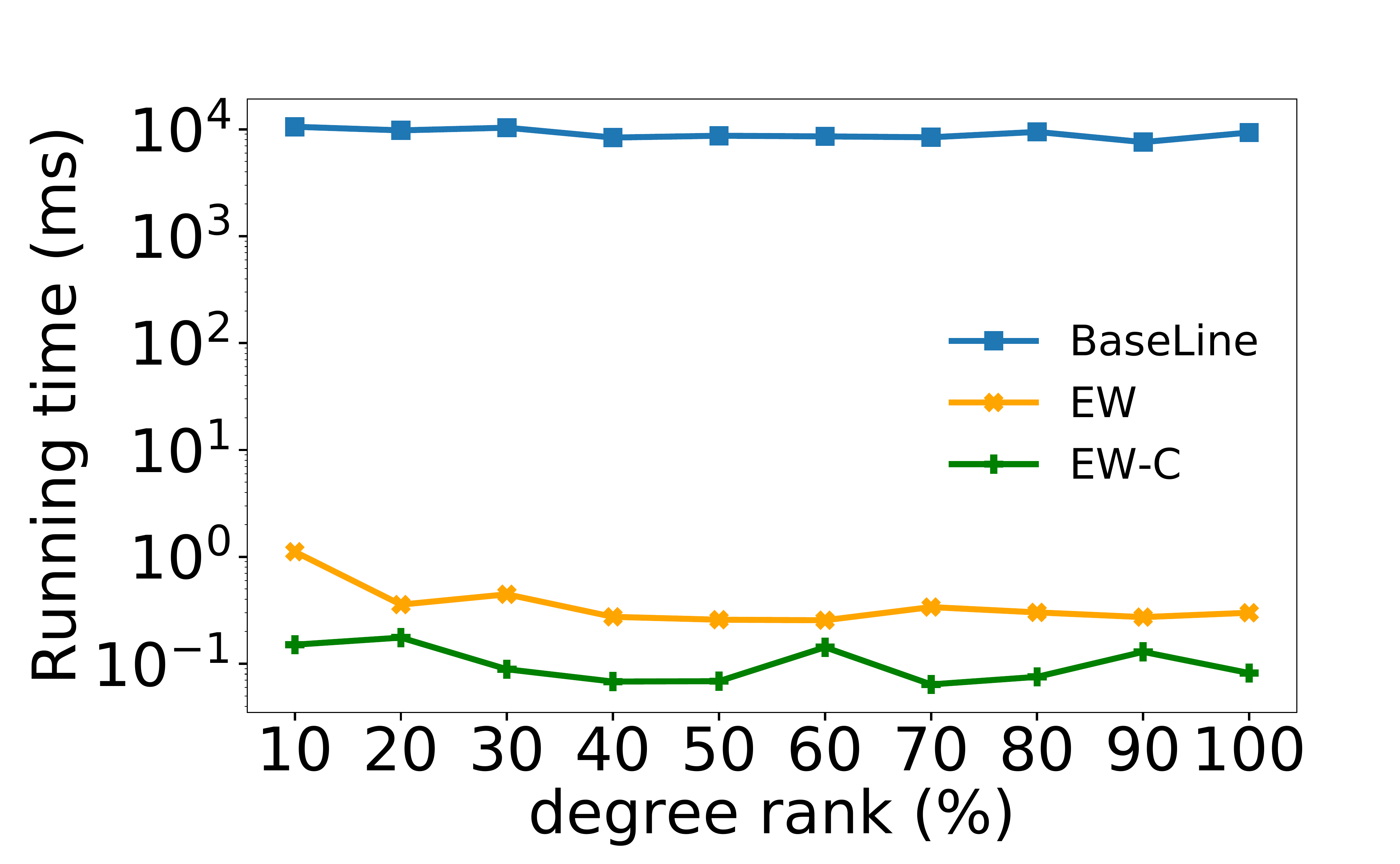}
\subcaption{Record\_label}
\end{subfigure}
\hspace{2em}
\begin{subfigure}{0.2\textwidth}
\includegraphics[scale=0.115]{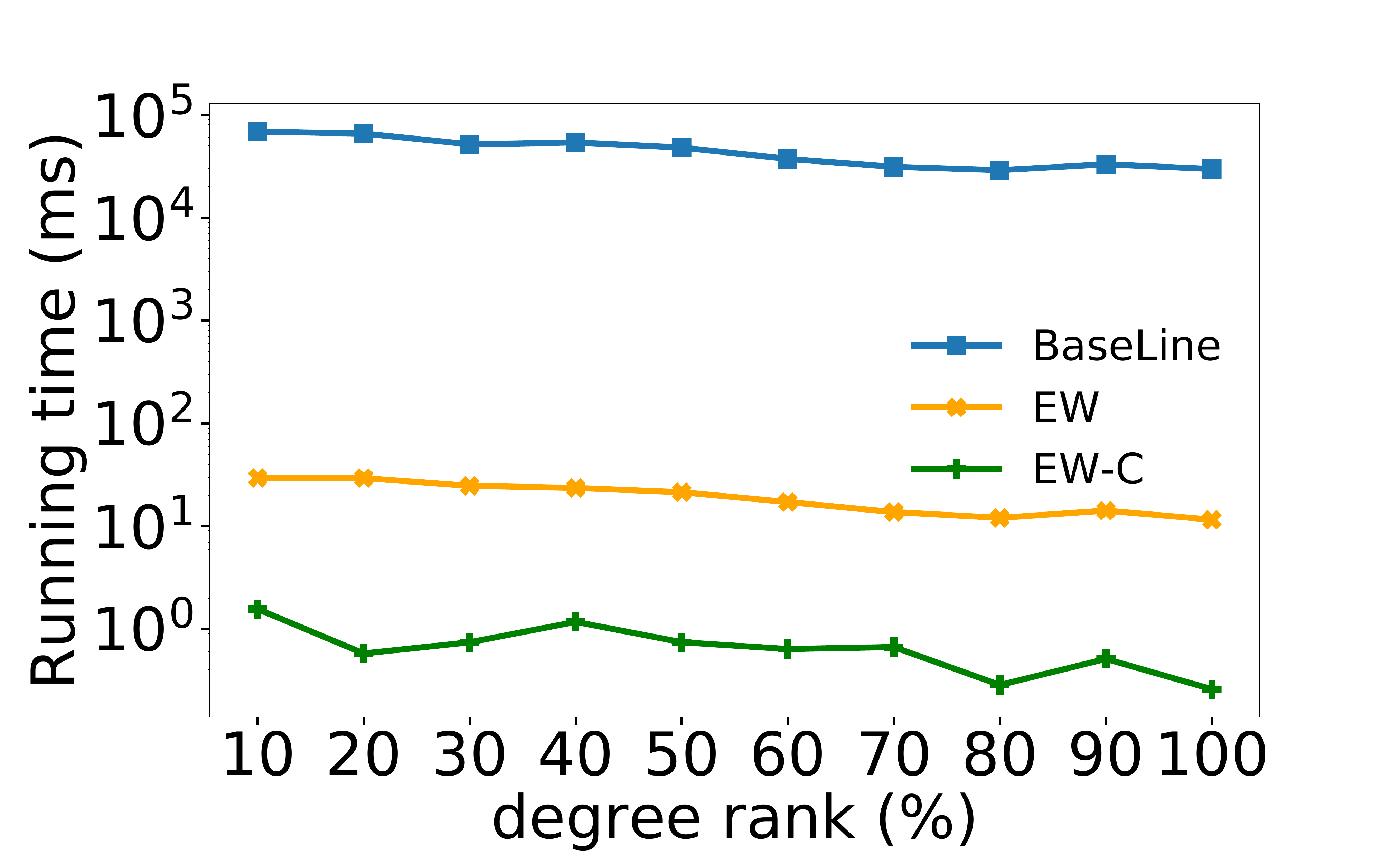}
\subcaption{YouTube}
\end{subfigure}
\hspace{2em}
\begin{subfigure}{0.2\textwidth}
\includegraphics[scale=0.115]{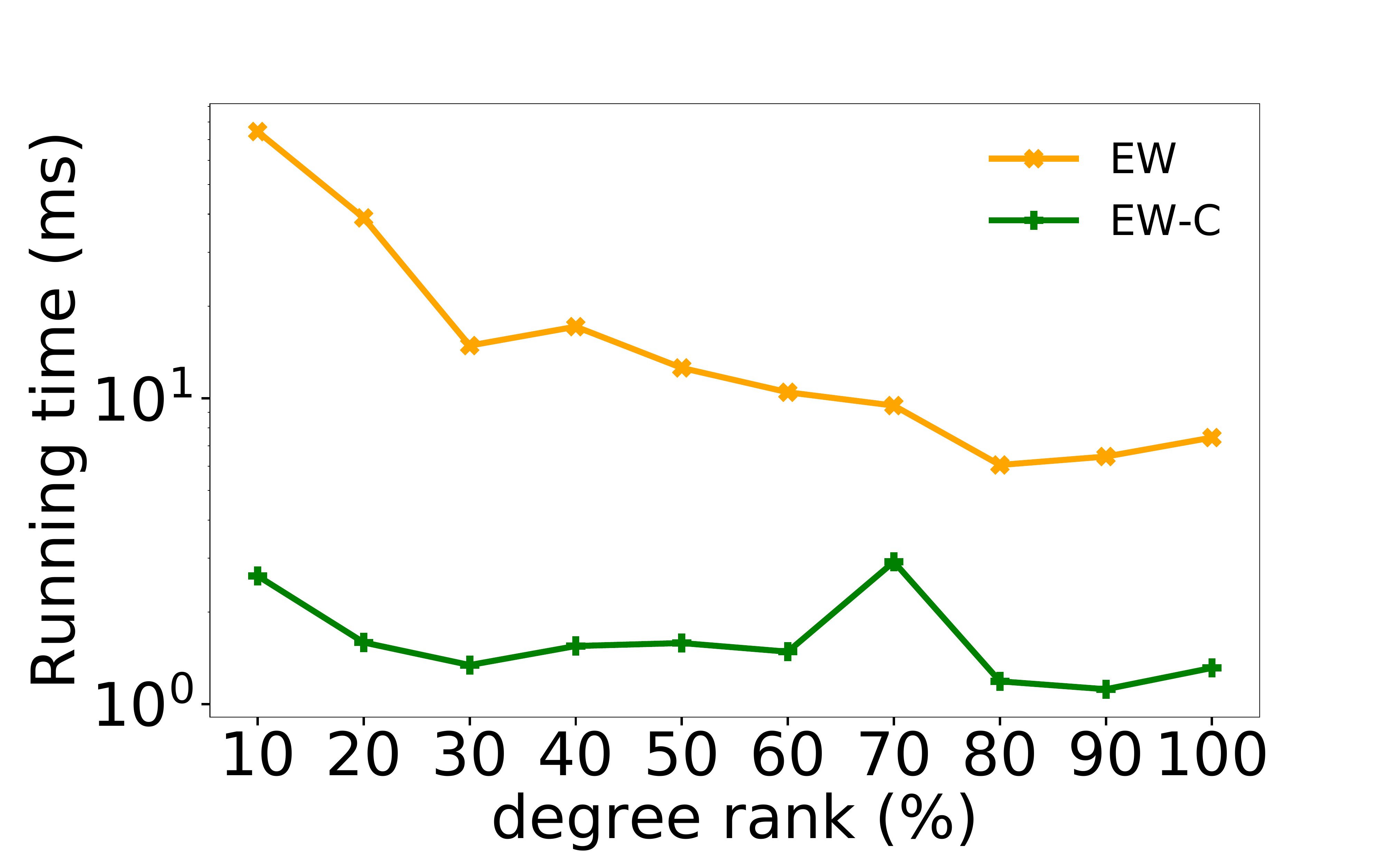}
\subcaption{Stackoverflow}
\label{subfig:stack_query}
\end{subfigure}
\caption{Cohesive subgraph performance on different vertex degree percentile on real-world datasets}
\label{fig:exp}
\vspace{-12pt}
\end{figure*}
\begin{figure*}
\hspace{-1em}
\begin{subfigure}{0.2\textwidth}
\includegraphics[scale=0.11]{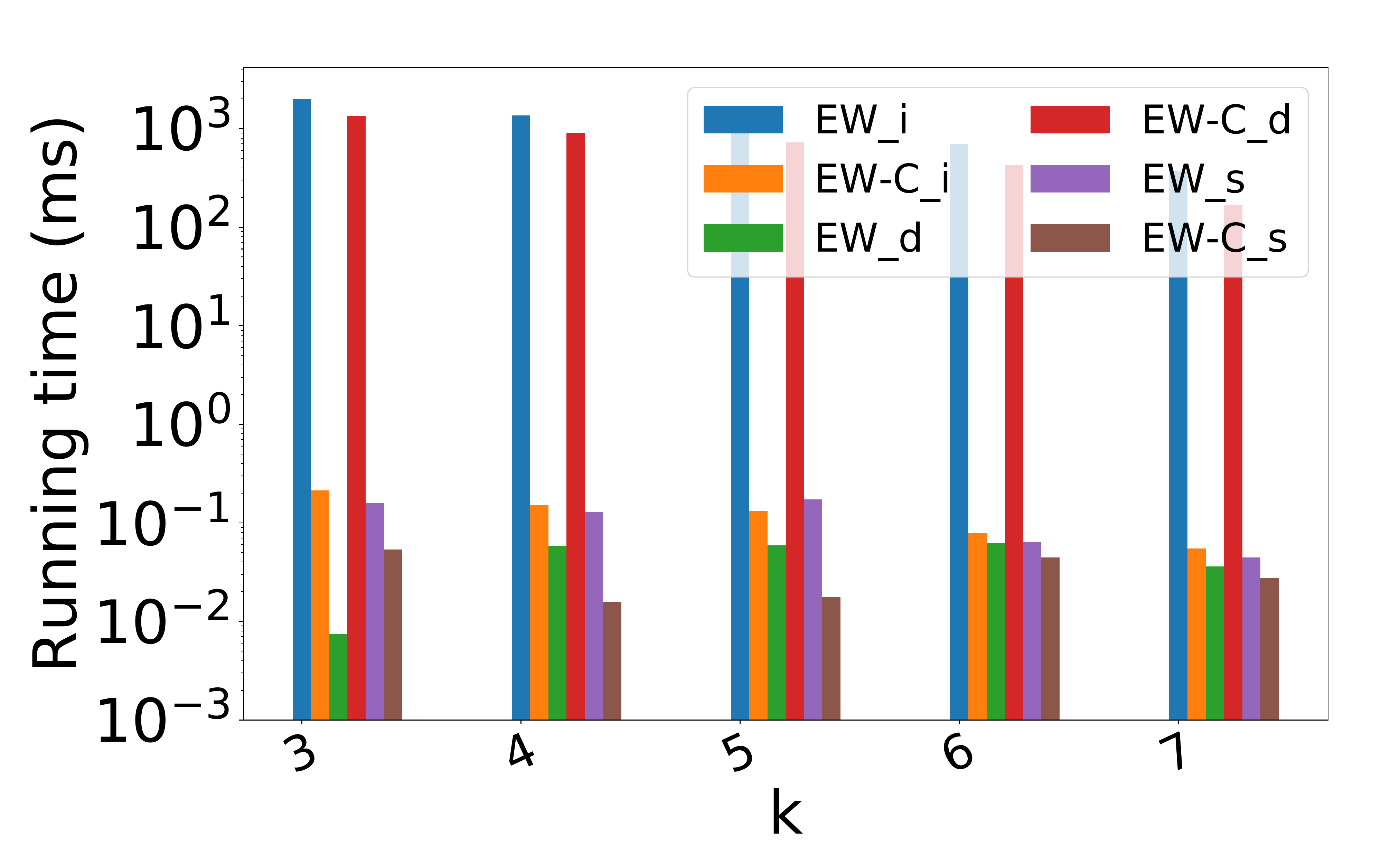}
\subcaption{Producer}
\end{subfigure}
\hspace{2em}
\begin{subfigure}{0.2\textwidth}
\includegraphics[scale=0.11]{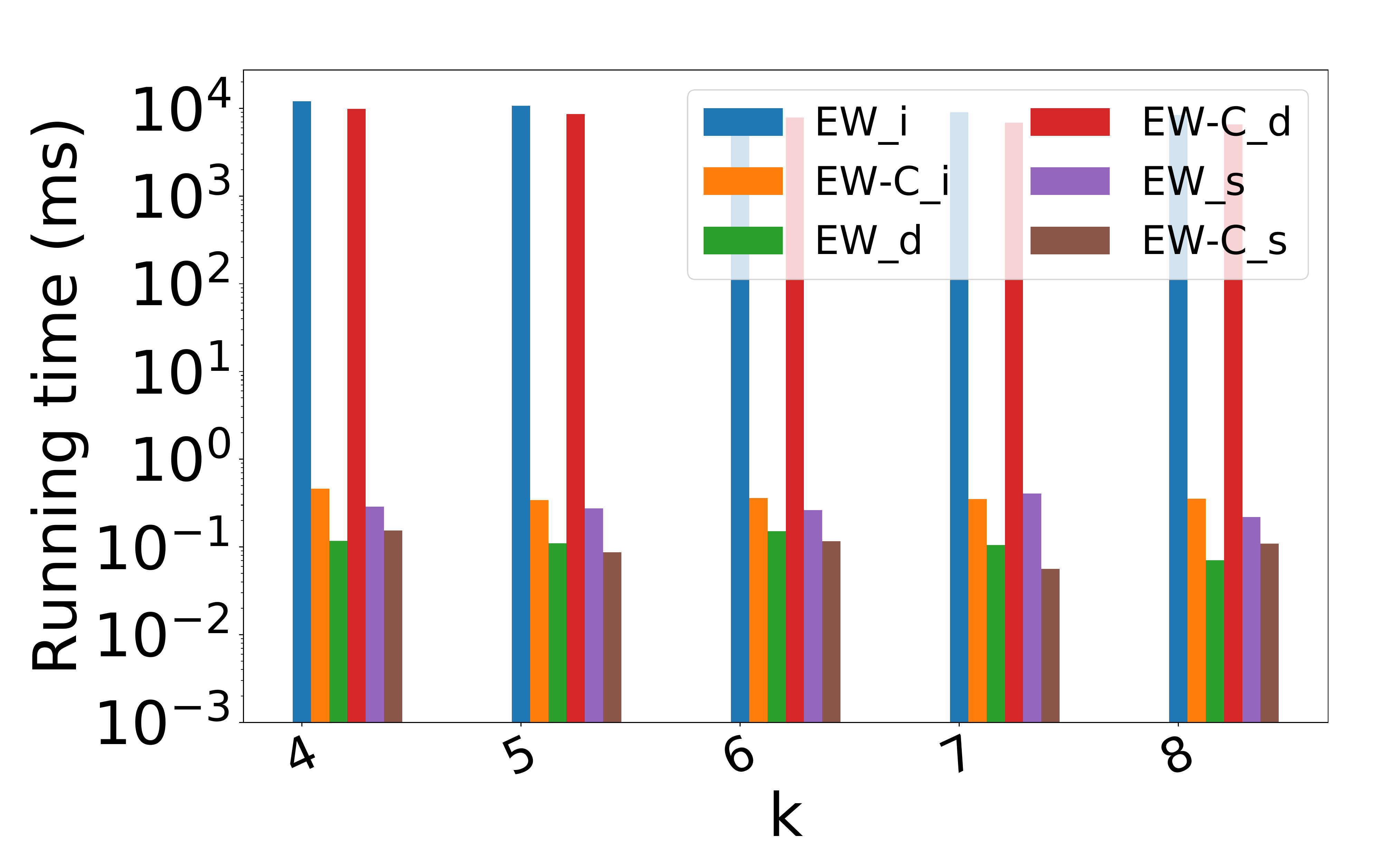}
\subcaption{Record\_label}
\end{subfigure}
\hspace{2em}
\begin{subfigure}{0.2\textwidth}
\includegraphics[scale=0.11]{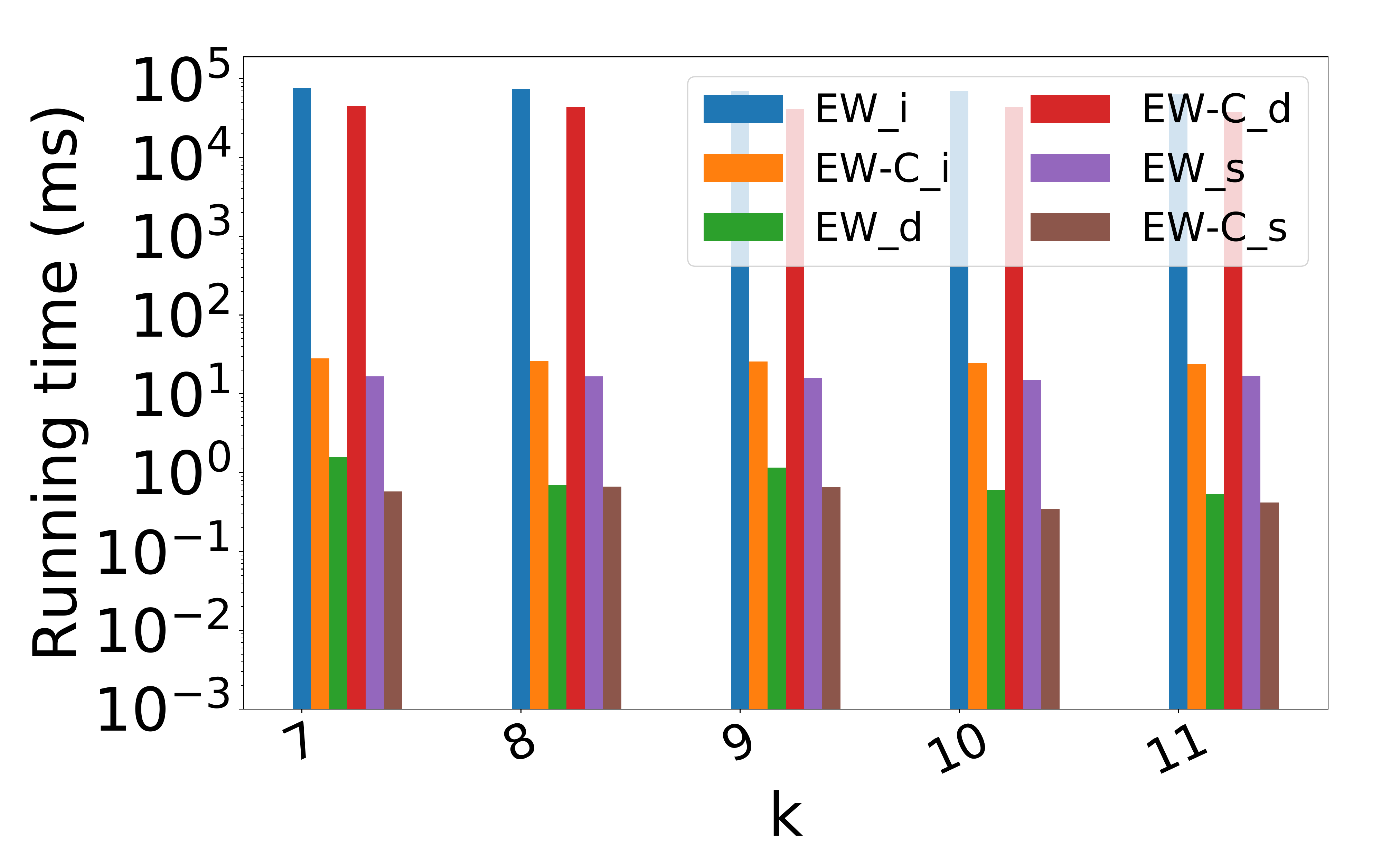}
\subcaption{YouTube}
\end{subfigure}
\hspace{2em}
\begin{subfigure}{0.2\textwidth}
\includegraphics[scale=0.11]{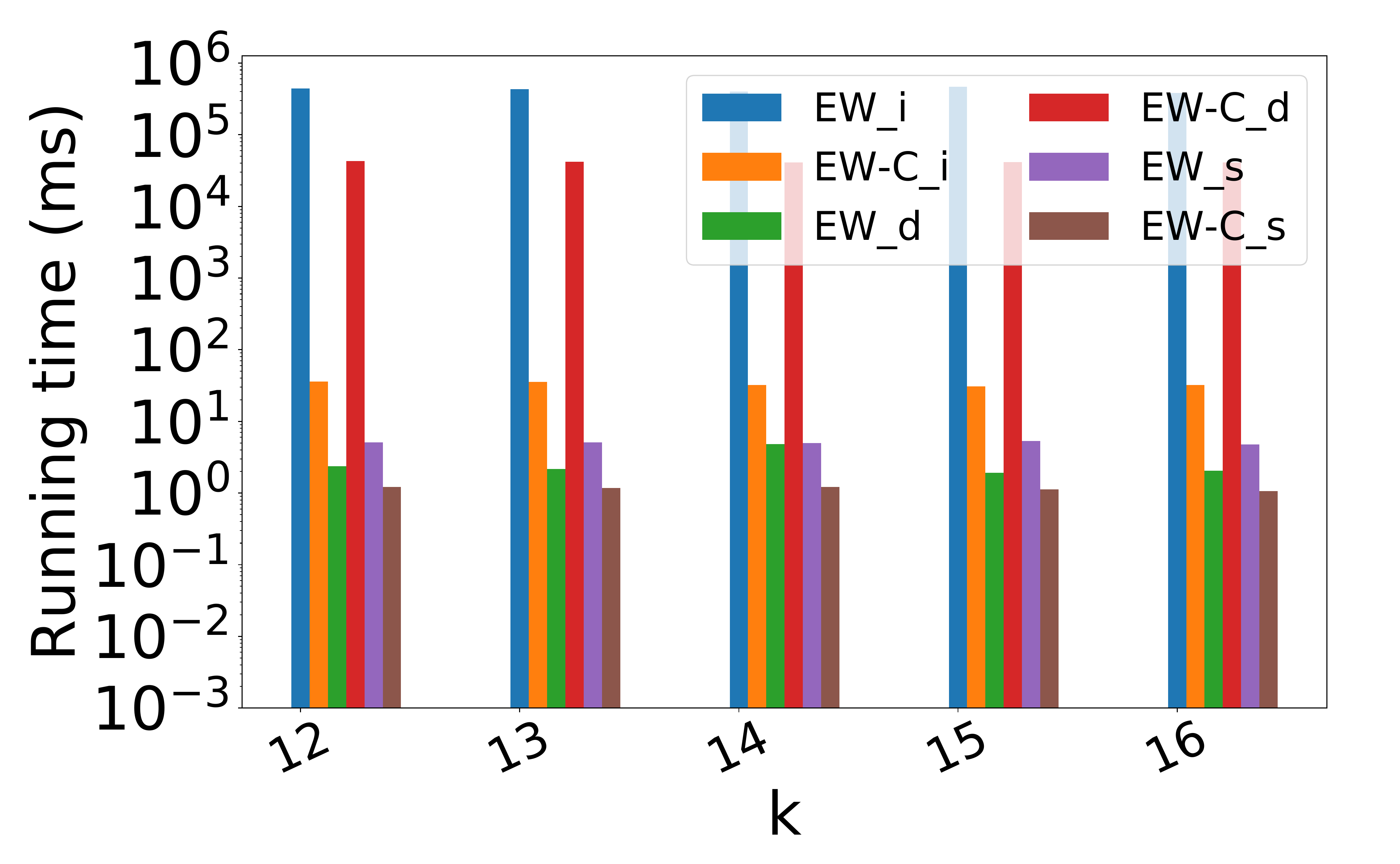}
\subcaption{Stackoverflow}
\end{subfigure}
\caption{Cohesive subgraph performance on different wing number values $k$ on real-world datasets}
\label{fig:exp_k}
\vspace{-12pt}
\end{figure*}
After the indices are constructed we can now utilize them to perform a $k$-wing search for a query vertex $q$. 
The experiment settings for evaluating query processing efficiency are inspired by \cite{huang2014querying, akbas2017truss}. We consider two experimental settings below. 


In the first set of experiments, to evaluate the diverse variety of the query processing time, vertices with different degrees across the datasets are selected randomly as query vertices. 
For each dataset, we sort all the vertices $v \in U\cup V$ in non-increasing order of their degrees and segregate them equally into ten buckets. The ordering is such that the first bucket contains the top 10\% high-degree vertices and the last one contains the last 10\% low-degree vertices in $G$. Once the ordering is obtained, we now perform a random selection of 100 query vertices from each bucket and the average query processing time for each bucket is reported in Fig. \ref{fig:exp}. The corresponding $k$ values for each dataset are $k=4$ for $Producer$, $k=5$ for $Record\_label$, $k=10$ for $YouTube$, and $k=13$ for $Stackoverflow$. 

We report the following experimental observations in different real-world datasets. (1) \textit{BaseLine} approach is the least efficient algorithm. It is very much slower than the index-based approaches $EW$ and $EW\raisebox{0.5pt}{-}{C}$. In $Stackoverflow$, the queries take approximately $20$ minutes for \textit{BaseLine}, whereas the index-based approaches require less than a second. Hence, it is omitted in Fig. \ref{subfig:stack_query} for the $Stackoverflow$ dataset. This vast difference is due to the BFS exploration of the bipartite graph and the costly butterfly-connectivity evaluations in the \textit{BaseLine} approach. Therefore, processing a personalized $k$-wing query without an index becomes impractical in the real-world datasets. (2) In all the datasets, the $k$-wing search time drops firmly as the queries drawn from high-degree percentile buckets to low-degree buckets specifically for $Producer$, $Record\_label$ and $Stackoverflow$. While for the $YouTube$, the $k$-wing search time remains almost constant. (3) For the smaller datasets $Producer$ and $Record\_label$ the maximum difference in the performances of the indices is at the sixth bucket, which itself is less than a millisecond, and hence the running time of the two index-based approaches can be considered to be comparable on these datasets. However, as the input dataset size increases $EW$ becomes significantly slower, $EW\raisebox{0.5pt}{-}{C}$ is at least $10$ times faster than $EW$ for high-degree query vertices, and at least twice the order of $EW$ for the low-degree query vertices. 

\begin{table}[t]
\caption{Index characteristics }
\centering
\scalebox{0.63}{
\begin{tabular}{|l|l|l|l|l|l|l|l|l|}
\hline
\multirow{2}{*}{Graph} & Graph  & \multicolumn{2}{l|}{$\#$ Super Nodes} &\multicolumn{2}{l|}{\begin{tabular}{@{}l@{}}Index \\ Space (MB)\end{tabular}} &
{\begin{tabular}{@{}l@{}}Compre \\ -ssion \end{tabular}} &   
\multicolumn{2}{l|}{\begin{tabular}{@{}l@{}}Construction \\ Time (sec.)\end{tabular}} \\ \cline{3-6}\cline{8-9} 
    &  Size (MB)    & $EW$   & $EW\raisebox{0.5pt}{-}{C}$   & $EW$  & $EW\raisebox{0.5pt}{-}{C}$   & Ratio & $EW$ & $EW\raisebox{0.5pt}{-}{C}$       \\ \hline
Producer& 1.09 & 5323 & 2986  & 1.70  & 0.89   & 1.78 & 21.61 & 24.72 \\ \hline
Record\_label& 1.3 & 1965 & 392 & 2.01  & 0.92 & 5.01 & 400.78 & 400.83 \\ \hline
YouTube & 2.7 & 31001 & 609 & 19.84  & 2.47 & 50.90 & 1397.10 & 1410.62 \\ \hline
Stackoverflow & 11.4 & 93275 & 790 & 57.51  & 8.85 & 118.07 & 14397.16 & 14457.46  \\ \hline
\end{tabular}
}
\label{table:index_construction}
\vspace{-10pt}
\end{table}

For the second set of experiments, we analyze the running time for the $k$-wing search across different real-world datasets by varying the parameter $k$. 
For each value of $k$ in the experiments, we form two query sets: the first set contains 100 high-degree vertices (selected at random from the first 30\% of the sorted vertices); the second set contains 100 low-degree vertices (selected at random from the remaining 70\% vertices). We evaluate all the algorithms for each set of queries denoted with the suffix $\_H$ (High) and $\_L$ (Low). Henceforth, we have $EW\_H$ and $EW\_L$ for \textit{EquiWing} index, $EW\raisebox{0.5pt}{-}{C}\_H$ and $EW\raisebox{0.5pt}{-}{C}\_L$ for \textit{EquiWing-Comp} index, and \textit{BaseLine\_H} and \textit{BaseLine\_L} for \textit{BaseLine} algorithm, for the corresponding query sets. The average running time for each method is reported in Fig. \ref{fig:exp_k} while varying the parameter $k$. We observe that for most of the datasets, $EW\raisebox{0.5pt}{-}{C}$ is the most efficient $k$-wing search method, which is at least one order of magnitude faster than $EW$ for the high-degree queries, for maximum values of $k$. 
The better performance of $EW\raisebox{0.5pt}{-}{C}$ can be explained using the reduced number of super nodes in comparison to $EW$, which expedites the query processing. 
The performance gap increases significantly with the larger datasets, such as $YouTube$ and $Stackoverflow$. These experimental evaluations yet again provide the superiority of \textit{EquiWing-Comp} for $k$-wing search. Moreover, we can easily infer that an approach without an index is infeasible and impractical in large datasets. For $Stackoverflow$, the running time of \textit{BaseLine} is at least  $10^4$ times slower than that of the algorithms using the indices.
\vspace{-5pt}

\subsection{Dynamic maintenance of $EW$ and $EW\raisebox{0.5pt}{-}{C}$}
\begin{table}[]
\centering 
\caption{Dynamic maintenance (sec.) of $EW$ and $EW\raisebox{0.5pt}{-}{C}$}
\scalebox{0.71}{
\begin{tabular}{|l|l|l|l|l|l|l|}
\hline
\multirow{2}{*}{\centering Graph} & \multicolumn{2}{l|}{ \centering Insertion} &\multicolumn{2}{l|}{\centering Deletion} &\multicolumn{2}{l|}{\centering Computing index from scratch} \\ \cline{2-7}
& $EW$   & $EW\raisebox{0.5pt}{-}{C}$   & $EW$ & $EW\raisebox{0.5pt}{-}{C}$ & $EW$ & $EW\raisebox{0.5pt}{-}{C}$  \\  \hline
Producer &0.009 & 0.009  & 0.162  & 0.529 & 21.61 & 24.72  \\ \hline
Record\_label & 0.076 & 0.165  & 17.307 & 18.298 & 400.78  & 400.83\\ \hline
YouTube & 0.061 & 4.214 & 20.347 & 176.071 & 1397.10 & 1410.62 \\ \hline
Stackoverflow & 395.815 & 452.563  & 1761.93 & 3464.93 & 14397.16  & 14457.46 \\ \hline
\end{tabular}
}
\label{table:dynamic}
\vspace{-10pt}
\end{table}

\vspace{-5pt}
In this set of experiments, we compare the performances of incremental update of the $EW$ and $EW\raisebox{0.5pt}{-}{C}$ when $G$ is updated with new edges insertion and existing edges deletion. We randomly select 100 edges for insertion/deletion, and update the wing number for the edges in both the indices after each insertion/deletion. The average update time which includes the edge wing number update and the indices update time, is reported in Table \ref{table:dynamic}. $EW$ is updated using Algorithm \ref{Algorithm:dynamic_update}, further we compress the updated $EW$ to produce $EW\raisebox{0.5pt}{-}{C}$ with a marginal extra cost for compression. The compression cost for $EW\raisebox{0.5pt}{-}{C}$ is almost trivial for insertion in all the datasets while it is comparatively more for the deletion yet the total update time for $EW\raisebox{0.5pt}{-}{C}$ is still very less w.r.t construction of $EW\raisebox{0.5pt}{-}{C}$ from the scratch.
We have also reported the index construction time for the $EW$ and $EW\raisebox{0.5pt}{-}{C}$ indices from scratch.

The results in Table \ref{table:dynamic} show that the update time per edge insertion ranges from $0.04\%$ (Producer) to $2.3\%$ (Stackoverflow) of the index construction from scratch. Thus, handling edge insertion is highly efficient. For the deletion case, the update time per edge deletion 
ranges from $0.4\%$ (Producer) to $23.9\%$ (Stackoverflow) of
the index construction from scratch. We can see that the incremental update approaches are several orders of magnitude faster than constructing the $EW$ and $EW\raisebox{0.5pt}{-}{C}$ from scratch when $G$ is updated. This demonstrates the superiority of our proposed incremental update algorithms.

\subsection{Case Study on Unicode}
We conduct a case study on the \textit{Unicode} dataset \cite{kunegis2013konect} to exhibit the effectiveness of \textit{personalized $k$-wing model}. It is composed of two sets of vertices i.e. languages and countries. An edge between two vertices depicts the language spoken in the country. We perform the $k$-wing search for the English language ($’en’$) in the \textit{Unicode} dataset with $k=$6 and $k=7$. The results are shown in Fig. \ref{subfig:case2} and \ref{subfig:case3}, respectively. We recognize that $’en’$ participates with two \textit{6-wings}. The first one contains edges with the maximum $\psi=6$ (green). The second one contains edges with $\psi=14$ (blue), which indicates it can further contain more dense $k$-wings. Therefore, in Fig. \ref{subfig:case3}, we set the value of $k=7$, the resulting $k$-wings are formed with the vertices (languages) $'fr','ar','es','tr','de'$ and $'el'$. Fig. \ref{subfig:case3} shows the green subgraphs are dissolved indicating that $'en'$ is more tightly related to the other $k$-wings. 
Hence, by tuning $k$, we can query a number of $k$-wings with different density and cohesiveness, which is crucial for personalized $k$-wing search in large graph analysis and studies.
Note that we duplicate some languages that participate in more than one $k$-wings in Fig. \ref{fig:casestudy}, e.g., $'ku\_1 '$ and $'ku\_2 '$, for a better visual representation. 

\section{Related Work}
\begin{figure}[]

\begin{subfigure}{0.23\textwidth}
\includegraphics[scale=0.16]{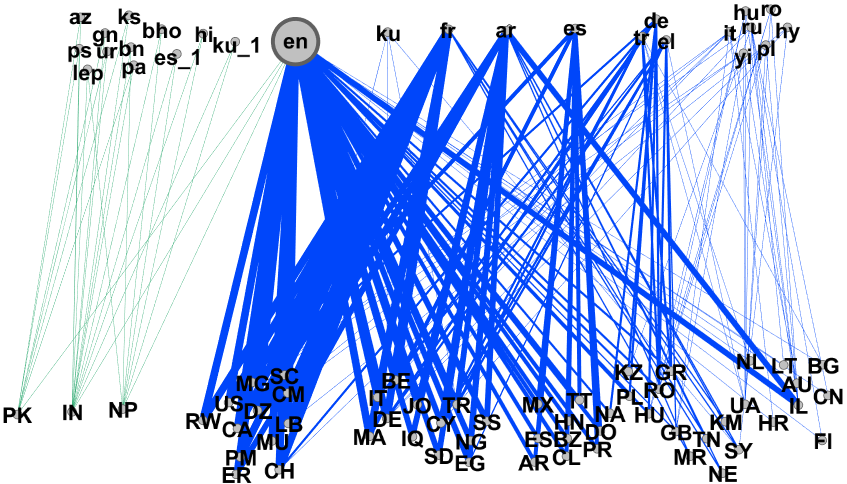}
\subcaption{Two 6-wings for $q=en$.}
\label{subfig:case2}
\end{subfigure}
\hspace{1em}
\begin{subfigure}{0.23\textwidth}
\centering
\includegraphics[scale=0.125]{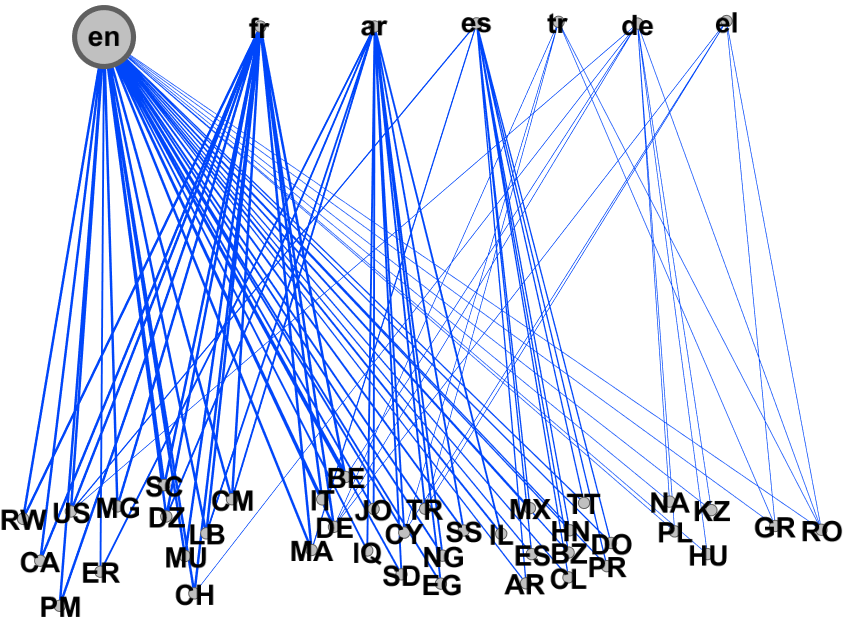}
\subcaption{One 7-wing for $q=en$.}
\label{subfig:case3}
\end{subfigure}

\caption{$k$-wings in Unicode Language datasets}
\label{fig:casestudy}
\vspace{-10pt}
\end{figure}


\label{section:related_work}
\noindent\textbf{Bipartite cohesive subgraph detection.} In this set of work, the problem focuses on enumerating all the cohesive subgraphs in a bipartite graph. Recently, the bipartite graph decomposition has attracted a lot of attention from the researchers \cite{DBLP:conf/wsdm/SariyuceP18, zou2016bitruss, DBLP:conf/icde/Wang0Q0020}, which can be used for querying cohesive subgraphs. Even though \cite{Sanei-Mehri:2018:BCB:3219819.3220097} and \cite{DBLP:journals/pvldb/WangLQZZ19} improved the decomposition algorithm by improving the butterfly counting, yet they are inefficient for repetitive query processing. \cite{liu2020efficient} presented an index-based approach to enumerate all the $(\alpha,\beta)$-core structures in a bipartite graph. The $(k,P)$-Bitruss model for the bipartite graph was also proposed in \cite{yang2020effective} to determine the densely connected vertex of the same type. Recently, \cite{fang2020effective} proposed a \textit{community search (or cohesive subgraph search)}  in heterogeneous networks based on \textit{$(k,P)$-core} structure.
The dynamic maintenance of the index for \textit{cohesive subgraph search} was discussed in \cite{liu2020efficient}.\\
\noindent\textbf{Cohesive subgraph search.}
The cohesive subgraph search has been implemented using different distinct cohesive structures such as \textit{$k$-core} \cite{Sozio2010cocktail}, \textit{$k$-truss} \cite{huang2014querying,akbas2017truss}, and \textit{$k$-edge connected subgraph} \cite{zhou2012finding,burt2005brokerage,chang2013efficiently}, etc. However, all of them are from the field of unipartite graphs. Unfortunately, there are very few such cohesive subgraph search models for bipartite graphs. 
Recently, \cite{wang2020efficient} proposed a community search in a weighted bipartite graphs using a $(\alpha,\beta)$-core model. The resulting significant $(\alpha,\beta)$-community $\mathcal{R}$ for a query vertex $q$ adopts $(\alpha,\beta)$-core to characterize the
engagement level of vertices, and maximizes the minimum edge weight (significance) within $\mathcal{R}$.
\cite{akbas2017truss} and \cite{huang2014querying} also discussed the incremental dynamic maintenance algorithm for $k$-truss.

\vspace{-5pt}
\section{Conclusion}
\label{section:conclusion}
In this paper, we examine the problem of personalized $k$-wing search for large and dynamic bipartite graphs. 
We propose two indices \textit{EquiWing} and \textit{EquiWing-Comp} to tackle the problem efficiently, which leads to linear time search algorithms. 
We construct \textit{EquiWing-Comp} by proposing the \textit{$k$-butterfly loose connectivity} and exploiting the hierarchical property of the $k$-wing, which further speeds up the query processing. 
We also discuss the efficient maintenance of the proposed indices in dynamic bipartite graphs.
We conduct extensive experiments across large real-world datasets. 
From these experiments, we observe the superiority of our index based approaches \textit{EquiWing} and \textit{EquiWing-Comp} over the baseline approach. The experiments also display the better compression and performance of \textit{EquiWing-Comp} over \textit{EquiWing}. 
Moreover, a case study is presented to display the effectiveness of our personalized $k$-wing model.

\renewcommand{\IEEEbibitemsep}{0pt}
\makeatletter
\IEEEtriggercmd{\reset@font\normalfont\footnotesize}
\makeatother
\IEEEtriggeratref{1}
\bibliographystyle{abbrv}
\bibliography{00main}

\begin{thebibliography}{10}

\bibitem{abbasi2012betweenness}
A.~Abbasi, L.~Hossain, and L.~Leydesdorff.
\newblock Betweenness centrality as a driver of preferential attachment in the
  evolution of research collaboration networks.
\newblock {\em Journal of Informetrics}, 6(3):403--412, 2012.

\bibitem{abidipivot}
A.~Abidi, R.~Zhou, L.~Chen, and C.~Liu.
\newblock Pivot-based maximal biclique enumeration.
\newblock In C.~Bessiere, editor, {\em IJCAI}, pages 3558--3564, 7 2020.

\bibitem{akbas2017truss}
E.~Akbas and P.~Zhao.
\newblock Truss-based community search: a truss-equivalence based indexing
  approach.
\newblock {\em PVLDB}, 10(11):1298--1309, 2017.

\bibitem{arnab2016analysis}
V.~Arnaboldi, R.~I.~M. Dunbar, A.~Passarella, and M.~Conti.
\newblock Analysis of co-authorship ego networks.
\newblock In A.~Wierzbicki, U.~Brandes, F.~Schweitzer, and D.~Pedreschi,
  editors, {\em Advances in Network Science}, pages 82--96, Cham, 2016.
  Springer International Publishing.

\bibitem{burt2005brokerage}
R.~S. Burt et~al.
\newblock {\em Brokerage and closure: An introduction to social capital}.
\newblock Oxford university press, 2005.

\bibitem{chang2013efficiently}
L.~Chang, J.~X. Yu, L.~Qin, X.~Lin, C.~Liu, and W.~Liang.
\newblock Efficiently computing k-edge connected components via graph
  decomposition.
\newblock In {\em SIGMOD}, pages 205--216, 2013.

\bibitem{Dhillon:2001:CDW:502512.502550}
I.~S. Dhillon.
\newblock Co-clustering documents and words using bipartite spectral graph
  partitioning.
\newblock In {\em SIGKDD}, pages 269--274, New York, NY, USA, 2001. ACM.

\bibitem{fain2006sponsored}
D.~C. Fain and J.~O. Pedersen.
\newblock Sponsored search: A brief history.
\newblock {\em Bulletin of the american Society for Information Science and
  technology}, 32(2):12--13, 2006.

\bibitem{fang2020effective}
Y.~Fang, Y.~Yang, W.~Zhang, X.~Lin, and X.~Cao.
\newblock Effective and efficient community search over large heterogeneous
  information networks.
\newblock {\em PVLDB}, 13(6):854--867, 2020.

\bibitem{Gibson:2005:DLD:1083592.1083676}
D.~Gibson, R.~Kumar, and A.~Tomkins.
\newblock Discovering large dense subgraphs in massive graphs.
\newblock In {\em PVLDB}, pages 721--732, 2005.

\bibitem{huang2014querying}
X.~Huang, H.~Cheng, L.~Qin, W.~Tian, and J.~X. Yu.
\newblock Querying k-truss community in large and dynamic graphs.
\newblock In {\em SIGMOD}, pages 1311--1322. ACM, 2014.

\bibitem{hunter2013structural}
S.~D. Hunter~III and R.~Chinta.
\newblock Structural holes and banner-ad click-throughs.
\newblock {\em Technology and Investment}, 4(01):30, 2013.

\bibitem{kaya2019hotel}
B.~Kaya.
\newblock Hotel recommendation system by bipartite networks and link
  prediction.
\newblock {\em Journal of Information Science}, page 0165551518824577, 2019.

\bibitem{kumar2010structure}
R.~Kumar, J.~Novak, and A.~Tomkins.
\newblock Structure and evolution of online social networks.
\newblock In {\em Link mining: models, algorithms, and applications}, pages
  337--357. Springer, 2010.

\bibitem{kunegis2013konect}
J.~Kunegis.
\newblock Konect: the koblenz network collection.
\newblock In {\em WWW}, pages 1343--1350. ACM, 2013.

\bibitem{liu2020efficient}
B.~Liu, L.~Yuan, X.~Lin, L.~Qin, W.~Zhang, and J.~Zhou.
\newblock Efficient ( $\alpha$, $\beta$)-core computation in bipartite graphs.
\newblock {\em The VLDB Journal}, pages 1--25, 2020.

\bibitem{oliveira2007observing}
R.~V. Oliveira, B.~Zhang, and L.~Zhang.
\newblock Observing the evolution of internet as topology.
\newblock In {\em SIGCOMM}, pages 313--324, 2007.

\bibitem{Sanei-Mehri:2018:BCB:3219819.3220097}
S.-V. Sanei-Mehri, A.~E. Sariyuce, and S.~Tirthapura.
\newblock Butterfly counting in bipartite networks.
\newblock In {\em SIGKDD}, pages 2150--2159, New York, NY, USA, 2018. ACM.

\bibitem{DBLP:conf/wsdm/SariyuceP18}
A.~E. Sariy{\"{u}}ce and A.~Pinar.
\newblock Peeling bipartite networks for dense subgraph discovery.
\newblock In {\em WSDM}, pages 504--512, 2018.

\bibitem{Sozio2010cocktail}
M.~Sozio and A.~Gionis.
\newblock The community-search problem and how to plan a successful cocktail
  party.
\newblock In {\em SIGKDD}, pages 939--948, New York, NY, USA, 2010. ACM.

\bibitem{DBLP:journals/pvldb/WangLQZZ19}
K.~Wang, X.~Lin, L.~Qin, W.~Zhang, and Y.~Zhang.
\newblock Vertex priority based butterfly counting for large-scale bipartite
  networks.
\newblock {\em {PVLDB}}, 12(10):1139--1152, 2019.

\bibitem{DBLP:conf/icde/Wang0Q0020}
K.~Wang, X.~Lin, L.~Qin, W.~Zhang, and Y.~Zhang.
\newblock Efficient bitruss decomposition for large-scale bipartite graphs.
\newblock In {\em ICDE}, pages 661--672. {IEEE}, 2020.

\bibitem{wang2020efficient}
K.~Wang, W.~Zhang, X.~Lin, Y.~Zhang, L.~Qin, and Y.~Zhang.
\newblock Efficient and effective community search on large-scale bipartite
  graphs.
\newblock {\em arXiv preprint arXiv:2011.08399}, 2020.

\bibitem{yamanishi2013chemogenomic}
Y.~Yamanishi.
\newblock Chemogenomic approaches to infer drug--target interaction networks.
\newblock In {\em Data Mining for Systems Biology}, pages 97--113. Springer,
  2013.

\bibitem{yang2020effective}
Y.~Yang, Y.~Fang, X.~Lin, and W.~Zhang.
\newblock Effective and efficient truss computation over large heterogeneous
  information networks.
\newblock ICDE, 2020.

\bibitem{zhang2017enterprise}
J.~Zhang, P.~S. Yu, and Y.~Lv.
\newblock Enterprise employee training via project team formation.
\newblock In {\em WSDM}, pages 3--12, 2017.

\bibitem{zhang2019domain}
L.~Zhang, J.~Li, Q.~Zhang, F.~Meng, and W.~Teng.
\newblock Domain knowledge-based link prediction in customer-product bipartite
  graph for product recommendation.
\newblock {\em International Journal of Information Technology \& Decision
  Making}, 18(01):311--338, 2019.

\bibitem{zhou2012finding}
R.~Zhou, C.~Liu, J.~X. Yu, W.~Liang, B.~Chen, and J.~Li.
\newblock Finding maximal k-edge-connected subgraphs from a large graph.
\newblock In {\em EDBT}, pages 480--491, 2012.

\bibitem{zou2016bitruss}
Z.~Zou.
\newblock Bitruss decomposition of bipartite graphs.
\newblock In {\em DASFAA}, pages 218--233. Springer, 2016.

\end{thebibliography}
\end{document}